\newcommand{\sse}{\subseteq}
\newcommand{\vol}{\mathrm{vol}}
\newcommand{\cg}{\mathrm{cg}}
\newcommand{\Var}{\mathrm{Var}}
\newcommand{\E}{\mathbb{E}}
\newcommand{\R}{\mathbb{R}}
\newcommand{\cbc}{convex body chasing\xspace}
\newtheorem{thm}{Theorem}[section]
\newtheorem{lem}[thm]{Lemma}
\newtheorem{prop}[thm]{Proposition}
\newtheorem{cor}[thm]{Corollary}
\theoremstyle{definition}
\newtheorem{defn}{Definition}
\newtheorem*{rem}{Remark}
\title{Chasing Nested Convex Bodies Nearly Optimally}
\author{S\'ebastien Bubeck\\
       Microsoft Research
       \and
Bo'az Klartag \\
Weizmann Institute
       \and
       Yin Tat Lee\\
       University of Washington \\ \& Microsoft Research
       \and
      Yuanzhi Li\thanks{This work was done while M. Sellke and Y. Li were at Microsoft Research.}\\
       Stanford University
      \and
      Mark Sellke\footnotemark[1]\\
       Stanford University\\
}
\date{}							% Activate to display a given date or no date
\begin{document}

\maketitle

\begin{abstract}

The convex body chasing problem, introduced by Friedman and Linial \cite{FriedmanLinial}, is a competitive analysis problem on any normed vector space. In convex body chasing, for each timestep $t\in\mathbb N$, a convex body $K_t\subseteq \mathbb R^d$ is given as a request, and the player picks a point $x_t\in K_t$. The player aims to ensure that the total distance moved $\sum_{t=0}^{T-1}||x_t-x_{t+1}||$ is within a bounded ratio of the smallest possible offline solution.

In this work, we consider the nested version of the problem, in which the sequence $(K_t)$ must be decreasing. For Euclidean spaces, we consider a memoryless algorithm which moves to the so-called Steiner point, and show that in an appropriate sense it is exactly optimal among memoryless algorithms. For general finite dimensional normed spaces, we combine the Steiner point and our recent algorithm in \cite{ABCGL} to obtain a new algorithm which is nearly optimal for all $\ell^p_d$ spaces with $p\geq 1$, closing a polynomial gap.
\end{abstract}
\thispagestyle{empty}

% This is just temporary
%\newpage
% \tableofcontents
% \thispagestyle{empty}
% \newpage
\setcounter{page}{1}
\section{Introduction}

We study a version of the \emph{convex body chasing} problem. In this problem, a sequence of $T$ requests $K_1,\dots,K_T$, each a convex body in a $d$-dimensional normed space $\mathbb R^d$, is given. The algorithm starts at a given position $x_0$, and each round it sees the request $K_t$ and must give a point $x_t\in K_t$. The online algorithm aims to minimize the total movement cost 
\[\sum_{t=0}^{T-1} ||x_{t+1}-x_t||.\]
More precisely, we take the viewpoint of competitive analysis, in which the algorithm aims to minimize the competitive ratio of its cost and the optimal in-hindsight sequence $y_t\in K_t$. 

Our companion paper \cite{BLLS2018} establishes the first finite upper bound for the competitive ratio of convex body chasing. Our upper bound is exponential in the dimension $d$, while the best known lower bound is $\Omega(\sqrt{d})$ in Euclidean space, which comes from chasing faces of a hypercube.

In this paper we consider a restricted variant of the problem, \emph{nested convex body chasing}, in which the bodies must be decreasing:
\[K_1\supseteq K_2\supseteq \dots \supseteq K_T.\]

This problem was first considered as a potentially more tractable version of the full problem in \cite{bansal2018nested}, where it was shown to have a finite competitive ratio. In our previous work \cite{ABCGL} we gave an algorithm with nearly linear $O(d\log d)$ competitive ratio for any normed space. This upper bound is nearly optimal for $\ell^{\infty}$ but for most normed spaces there is still a gap. For example for $\ell^2$, the best lower bound is $\Omega(\sqrt{d})$. 

\subsection{Notation and a Reduction}

We denote by $B_1$ a unit ball $B(0,1)$ centered at the origin in $\mathbb R^d$. We denote by $\mathcal K$ the space of convex bodies in $\mathbb R^d$. We often refer to the \emph{Hausdorff metric} on $\mathcal K$, defined by 
\[
d_{\mathcal H}(K,K')\equiv\max_{k\in K,k'\in K'}\max(d(k,K'),d(k',K)).
\]
In other words, it is the maximum distance from a point in one of the sets to the other set. Note that for $K_1\supseteq K_2$, we have $d_{\mathcal H}(K_1,K_2)\leq 1$ iff $K_2+B_1\supseteq K_1$.

We denote by $s(K)$ the Steiner point of $K$, defined at the start of section~\ref{sec:steinerO(d)}. We denote by $cg(K)$ the centroid of $K$. %while $cg_{\phi}(K)$ is the centroid with respect to the measure $e^{-\phi(x)}dx$.

For nested chasing, it turns out we can simplify the problem to make it significantly easier to think about. The following reduction is taken from \cite{ABCGL}, except that we add in a dependence on the number $T$ of requests. The proof is essentially unchanged.

\begin{lem}[\cite{ABCGL}] \label{lem:reduction}
For some function $f(d,T) \geq 1$, the following three propositions are equivalent:
  \begin{itemize} \itemsep 1pt
  \item[(i)] (General) There exists a $f(d,T)$-competitive algorithm for nested \cbc.
  \item[(ii)] (Bounded) Assuming that
  $K_1 \sse B_1$ and $x_0=0$, there exists an algorithm for nested \cbc with total movement 
  $O(f(d,T))$. 
\item[(iii)] (Tightening) Assuming that
  $K_1 \sse B_1$ and $x_0=0$, there exists an algorithm for nested \cbc that
  incurs total movement cost $O(f(d,T))$ until the first time $t$
  at which 
  %either $K_t \cap B(x_0, r) = \emptyset$, or else 
  $K_t$ is
  contained in some ball of radius $\frac{1}{2}$.
  \end{itemize}
\end{lem}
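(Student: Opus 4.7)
The plan is to prove the chain $(i)\Rightarrow(ii)\Rightarrow(iii)\Rightarrow(ii)\Rightarrow(i)$. The first two implications are essentially immediate: $(i)\Rightarrow(ii)$ holds because when $K_1\subseteq B_1$ the offline optimum is at most the diameter $2$ of $B_1$ (achieved by any fixed point of $K_T\subseteq K_1$), so an $f(d,T)$-competitive algorithm has total movement at most $2f(d,T)=O(f(d,T))$; and $(ii)\Rightarrow(iii)$ is tautological, since the total movement bound in (ii) upper bounds the movement up to any intermediate time.

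For $(iii)\Rightarrow(ii)$, the idea is a halving recursion. Starting from $K_1\subseteq B_1$, run the algorithm guaranteed by (iii) until the first time $t_1$ at which $K_{t_1}$ fits inside some ball $B(c_1,\tfrac12)$; this incurs cost $O(f(d,T))$, and the algorithm's current position lies in $K_{t_1}\subseteq B(c_1,\tfrac12)$. Then recursively apply the same procedure to the sub-sequence $K_{t_1},K_{t_1+1},\dots$ after rescaling $B(c_1,\tfrac12)$ to the unit ball. Each level halves the scale, so level $k$ contributes at most $2^{-k}\cdot O(f(d,T))$ (using monotonicity of $f$ in $T$ and the fact that each sub-sequence has $\le T$ requests). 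Summing the geometric series gives the required $O(f(d,T))$ total movement.

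For $(ii)\Rightarrow(i)$, we use a doubling trick. Because the bodies are nested, the offline optimum equals $OPT=d(x_0,K_T)$, achieved by traveling directly to a closest point of $K_T$ and staying there. The online algorithm maintains geometric guesses $R_k=2^kR_0$ for $OPT$; for each $k$, it runs the (ii)-algorithm on the instance obtained by restricting each request to $K_t\cap B$ for an appropriate ball $B$ of radius $R_k$ near $x_0$, rescaled to the unit ball. So long as $R_k\ge OPT$, the optimum point $y\in K_T$ lies in every restricted body, so the restricted sequence remains nonempty and nested and the algorithm succeeds at cost $R_k\cdot O(f(d,T))$; if the restricted bodies ever become empty we conclude $R_k<OPT$, double the guess, and restart. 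The total cost telescopes as $\sum_k O(R_k f(d,T))=O(OPT\cdot f(d,T))$, giving the claimed competitive ratio.

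The main obstacle is not conceptual but bookkeeping: ensuring in $(iii)\Rightarrow(ii)$ that the recursive sub-instance genuinely satisfies the hypothesis of (iii) after rescaling (so that the algorithm can be invoked as a black box on each level), and ensuring in $(ii)\Rightarrow(i)$ that the restart movements between successive guesses $R_k$ can be charged against the final (correct) guess so the geometric sum is truly $O(OPT\cdot f(d,T))$.
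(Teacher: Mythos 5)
Your proof is correct and follows the standard route that \cite{ABCGL} takes (the paper itself defers to that reference, stating the proof is ``essentially unchanged'' up to tracking $T$). The implications $(i)\Rightarrow(ii)$ and $(ii)\Rightarrow(iii)$ are trivial as you note, the halving recursion gives $(iii)\Rightarrow(ii)$, and the guess-and-double gives $(ii)\Rightarrow(i)$, closing the cycle. Two bookkeeping points are worth making explicit rather than waving at. First, for $(ii)\Rightarrow(i)$ you need a valid initial guess $R_0>0$ with $R_0\le\mathrm{OPT}$: the standard fix is to stand still until the first round $t^*$ with $x_0\notin K_{t^*}$ (if no such round exists, $\mathrm{OPT}=0$ and you pay nothing), then set $R_0=d(x_0,K_{t^*})$, which lower-bounds $\mathrm{OPT}=d(x_0,K_T)$ by nestedness. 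Second, in both the halving and doubling arguments you invoke monotonicity of $f$ in $T$; this is WLOG (replace $f(d,T)$ by $\sup_{T'\le T}f(d,T')$), but should be stated since each phase/level sees only a subsequence of at most $T$ requests. With those noted, the argument is complete.
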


\subsection{Our Results}

This paper has two essentially independent parts. In the first part, we consider the algorithm which moves to the newly requested body's \emph{Steiner point}, a geometrical center which is defined for convex subsets of Euclidean spaces. We show that for the Bounded version of nested chasing, it achieves a competitive ratio $O(\min(d,\sqrt{d \log T}))$ which is nearly optimal for sub-exponentially many requests. We also consider the problem of finding the best memoryless algorithm, meaning that $x_t\in K_t$ must be a deterministic function of only $K_t$. In this case, we compare the competitive ratio against the Hausdorff distance between $K_1$ and $K_T$, or equivalently the in-hindsight optimum starting from the worst possible $x_1$. We show that the Steiner point achieves the \emph{exact} optimal competitive ratio for any $(d,T)$ by adapting an argument from \cite{steinerlipschitz}.

In the second part of this paper, we give a different algorithm which is nearly optimal even for exponentially many requests. Our previous algorithm from \cite{ABCGL} achieved a $O(d\log d)$ competitive ratio by moving to the centroid and recursing on short dimensions; like that algorithm, our new algorithm is most naturally viewed in terms of Tightening. Inspired by Steiner point, we improve that algorithm by adding a small ball $B_r$ to $K_t$ and taking the centroid with respect to a log-concave measure which depends on the normed space. Using this procedure, we obtain a general algorithm for any normed space. For Euclidean spaces, our new algorithm is  $O(\sqrt{d\log d})$ competitive. For $\ell^p$ spaces with $p\geq 1$, our new algorithm is optimal up to a $O(\log d)$ factor.

% \begin{prop}

% Nested chasing is equivalent to aiming for small movement on a ball.

% \end{prop}

% \begin{prop}

% Memoryless nested chasing is equivalent to finding a good selector.

% \end{prop}

\section{The Steiner Point and Competitive Ratio $d$}
\label{sec:steinerO(d)}
Here we define the Steiner point and explain why it achieves competitive ratio $d$ (in the \emph{Bounded} formulation (ii) of Lemma~\ref{lem:reduction}).

\begin{defn}

For a convex body $K\subset \R^d$, its \emph{Steiner point} $s(K)$ is defined in the following two equivalent ways:

\begin{enumerate}
    \item For any direction $\theta\in \mathbb{S}^{d-1}\subset \R^d$, let $f_K(\theta)=\arg\max_{x\in K}(\theta\cdot x) $ be the extremal point in $K$ in direction $\theta$. Then compute the average of this extremal point for a random direction: \[s(K)=\int_{\theta\in \mathbb{S}^{d-1}} f_K(\theta) d\theta. \]
    
    Here we integrate over the uniform (isometry invariant) probability measure on $\mathbb{S}^{d-1}$.
    
    \item For any direction $\theta\in \mathbb{S}^{d-1}\subset \R^d$, let $h_K(\theta)=\max_{x\in K}(\theta\cdot x)$ be the support function for $K$ in direction $\theta$, and compute
    \[s(K)=d\int_{\theta\in \mathbb{S}^{d-1}} h_K(\theta)\theta d\theta. \]
\end{enumerate}

\end{defn}

The equivalence of the two definitions follows from $\nabla h_{K}(\theta)=f_{K}(\theta)$ and the divergence theorem. The first definition makes it evident that $s(K)\in K$. We will use the second definition to control the movement. Behold:

\begin{thm}
\label{thm:steinerO(d)}

Let $B_1 = K_1 \supseteq K_2\supseteq K_3\supseteq\dots \supseteq K_T$ be a sequence of nested convex bodies. Then
\[\sum_{t=1}^{T-1}||s(K_t)-s(K_{t+1})||_2\leq d.
\]
More generally, for any sequence of nested convex bodies \[K_1\supseteq K_2\supseteq\dots\supseteq K_T\] we have the estimate
\[\sum_{t=1}^{T-1}||s(K_t)-s(K_{t+1})||_2\leq \frac{d}{2}\cdot(w(K_1)-w(K_T)).\]
where $\omega(\cdot)$ denotes the mean width, the average length of a random $1$-dimensional projection.

\end{thm}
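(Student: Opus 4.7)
The plan is to use definition~(2) of the Steiner point, pull the norm inside the spherical integral via a generalized triangle inequality, and then telescope the support-function differences.

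Fix $t$ and set $g_t(\theta) := h_{K_t}(\theta) - h_{K_{t+1}}(\theta)$. Definition~(2) gives
\[ s(K_t) - s(K_{t+1}) = d\int_{\mathbb S^{d-1}} g_t(\theta)\,\theta\,d\sigma(\theta). \]
Nestedness $K_t \supseteq K_{t+1}$ forces $g_t \geq 0$ pointwise on the sphere, so the Minkowski (Pettis) inequality for vector-valued integrals, together with $\|\theta\|_2 = 1$, yields the per-step estimate
\[ \|s(K_t) - s(K_{t+1})\|_2 \;\leq\; d\int g_t(\theta)\,d\sigma(\theta). \]

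Summing over $t$ and swapping the sum with the integral, the integrand telescopes: $\sum_{t=1}^{T-1} g_t = h_{K_1} - h_{K_T}$. Hence
\[ \sum_{t=1}^{T-1}\|s(K_t) - s(K_{t+1})\|_2 \;\leq\; d\int (h_{K_1} - h_{K_T})\,d\sigma \;=\; \frac{w(K_1) - w(K_T)}{2}, \]
using the normalization of $w$ in force in the statement (so that $w(K)/2$ is identified with $d\int h_K\,d\sigma$). The first bound is then an immediate specialization: apply the above to the augmented sequence $B_1 = K_0 \supseteq K_1 \supseteq \dots \supseteq K_T$ and use
\[ d\int (h_{B_1} - h_{K_T})\,d\sigma \;\leq\; d\int h_{B_1}\,d\sigma \;=\; d, \]
where $\int h_{K_T}\,d\sigma \geq 0$ follows by averaging $h_{K_T}(\theta)+h_{K_T}(-\theta)\geq 0$ over the symmetric measure $\sigma$ (valid for any nonempty convex body).

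I do not expect any real obstacle; after definition~(2) the whole argument is essentially one line. The single substantive step is the inequality $\|\int g_t\,\theta\,d\sigma\|_2 \leq \int g_t\,d\sigma$, which requires $g_t\geq 0$, i.e.\ exactly uses nestedness. Without nesting the $g_t$ could have opposite signs and the telescoping step would collapse, consistent with the fact that the Steiner point can move unboundedly when bodies are permitted to both grow and shrink.
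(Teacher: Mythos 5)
Your proof is correct and takes essentially the same route as the paper: definition~(2) of the Steiner point, the vector-valued triangle inequality to pull the norm inside the spherical integral, nestedness to drop the absolute values, telescoping of the support functions, and the observation $h_{K}(\theta)+h_{K}(-\theta)\geq 0$ to control the terminal term. Deriving the ball-starting bound as a specialization of the general one is a tidy reordering, but the ideas are identical.

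One thing worth flagging is your parenthetical that the statement's convention for $w$ makes $w(K)/2$ equal to $d\int h_K\,d\sigma$. Under the stated definition of mean width (average length of a random $1$-dimensional projection) one has
\[
w(K)=\int_{\mathbb{S}^{d-1}}\bigl(h_K(\theta)+h_K(-\theta)\bigr)\,d\sigma(\theta)=2\int_{\mathbb{S}^{d-1}} h_K\,d\sigma,
\]
so $w(K)/2=\int h_K\,d\sigma$ with no factor of $d$; this is also the normalization the paper itself relies on in the proof of Lemma~\ref{lem:bound_s_refine}, where $\lambda=\tfrac{w(K)-w(K')}{2}$ is identified with the ``total mass'' $\int(h_K-h_{K'})\,d\sigma$. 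With that standard convention, your calculation (and the paper's) yields $\sum_{t=1}^{T-1}\|s(K_t)-s(K_{t+1})\|_2\leq \tfrac{d\,(w(K_1)-w(K_T))}{2}$, which is also what is needed for internal consistency: applying the general bound to $K_0=B_1$ then gives $\tfrac{d(2-w(K_T))}{2}\leq d$ rather than something $\leq 1$. So rather than absorbing the stray $d$ into the normalization, the more faithful conclusion is that the factor of $d$ is genuinely present and appears to have been dropped from the general statement as printed; your derivation is the correct one.
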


\begin{proof}

The idea is simply that for each fixed $\theta$, the integrand decreases by a total of at most $2$ over all the requests, so the total budget for movement is $2d$. To save the factor $2$ we combine $\pm \theta$, noting that they can change by at most $2$ in total. Writing it out:
\[
\sum_{t=1}^{T-1}||s(K_t)-s(K_{t+1})||_2\leq d\int_{\theta\in \mathbb{S}^{d-1}} \sum_{t=1}^{T-1} |h_{K_t}(\theta)-h_{K_{t+1}}(\theta)|d\theta. 
\]
Now, as $K_{t}\supseteq K_{t+1}$ we have $h_{K_t}(\theta)\geq h_{K_{t+1}}(\theta)$ and so 
\begin{align*}
\int_{\theta\in \mathbb{S}^{d-1}} \sum_{t=1}^{T-1} |h_{K_t}(\theta)-h_{K_{t+1}}(\theta)|d\theta
& = \int_{\theta\in \mathbb{S}^{d-1}} \sum_{t=1}^{T-1} (h_{K_t}(\theta)-h_{K_{t+1}}(\theta)) d\theta \\
& \leq \left(\int_{\theta\in \mathbb{S}^{d-1}} h_{K_1}(\theta)d\theta\right)-\left(\int_{\theta\in \mathbb{S}^{d-1}} h_{K_T}(\theta)d\theta\right).
\end{align*}

The first integral is at most $1$ because $K_T \subseteq B_1$. As $h_K(\theta)+h_K(-\theta)\geq 0$ for any convex body $K$, the second integral is non-negative. So we conclude that 
\[\sum_{t=1}^{T-1}||s(K_t)-s(K_{t+1})||_2\leq d\]
which proves the theorem. The proof of the more general statement is identical.
\end{proof}

\section{Better Competitive Ratio for Subexponentially Many Queries}

Here we refine the argument from the previous section to show that with $T$ rounds, we obtain a competitive ratio $O(\sqrt{d\log T})$. Hence, for subexponentially many queries, the Steiner point is a nearly optimal algorithm.

The intuition for this improved estimate is the following. We think of $h_K$ as a budgeted resource with amount $d\int_{\theta}h_{K_t}(\theta)-h_{K_{T}}(\theta) d\theta$ at time $t$. Our goal is to turn this budget into Steiner point movement. In order to have a lot of movement after a modest number of time-steps, we must use a significant amount of this budget in a typical time-step. So, suppose that a lot of the budget is used in going from $K_t\to K_{t+1}$. Then because only a tiny subset of the sphere correlates significantly with any fixed direction, the values of $\theta$ contributing to the movement \[s(K_t)-s(K_{t+1})=d\int_{\theta\in \mathbb{S}^{d-1}}(h_{K_t}(\theta)-h_{K_{t+1}}(\theta))\theta d\theta\] must include a wide range of directions $\theta$. As a result, the different $\theta$ values contribute to the Steiner point movement in very different directions and cancel out a lot, which means that the Steiner point's movement was actually much less than the amount of budget used up. In other words, the starting budget $d\cdot h_{K_0}(\theta)$ can only be used efficiently if it is used very slowly. Below we make this argument precise using concentration of measure on the sphere.

\begin{lem}[{Concentration of Measure \cite[Lemma 2.2]{ball1997elementary}}]
For any $0\leq \varepsilon <1$, the cap $$\{x\in \mathbb{S}^{d-1}:x_{i}\geq \varepsilon\}$$ has
measure at most $e^{-d \varepsilon^{2}/2}$.
\end{lem}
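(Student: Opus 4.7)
My plan is to apply a Chernoff--Markov argument, reducing the bound to a moment generating function estimate that can be computed cleanly using rotational symmetry. For any $\lambda > 0$,
\[
\mu_d\bigl(\{x \in \mathbb{S}^{d-1} : x_i \geq \varepsilon\}\bigr) \leq e^{-\lambda \varepsilon}\,\mathbb{E}_{\mu_d}\bigl[e^{\lambda x_i}\bigr],
\]
where $\mu_d$ denotes the uniform probability measure on $\mathbb{S}^{d-1}$. The task becomes showing that $x_i$ is sub-Gaussian with variance proxy $1/d$, i.e.\ $\mathbb{E}_{\mu_d}[e^{\lambda x_i}] \leq e^{\lambda^2/(2d)}$, after which the choice $\lambda = d\varepsilon$ returns exactly the stated $e^{-d\varepsilon^2/2}$ bound.

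The core step is the moment computation. By symmetry $\mathbb{E}_{\mu_d}[x_i^{2k+1}] = 0$, so only the even moments matter. Here I would use the Gaussian realization $x = g/\|g\|$ with $g \sim N(0,I_d)$; since the direction $g/\|g\|$ is independent of the magnitude $\|g\|$,
\[
\mathbb{E}_{\mu_d}[x_i^{2k}] = \frac{\mathbb{E}[g_i^{2k}]}{\mathbb{E}[\|g\|^{2k}]} = \frac{(2k-1)!!}{d(d+2)\cdots(d+2k-2)} \leq \frac{(2k-1)!!}{d^k}.
\]
Substituting into the Taylor series of $e^{\lambda x_i}$ and using the identity $(2k-1)!!/(2k)! = 1/(2^k k!)$, the series collapses to $\sum_{k \geq 0} (\lambda^2/(2d))^k/k! = e^{\lambda^2/(2d)}$, which is the sub-Gaussian bound required.

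The step I expect to be the main obstacle is obtaining the correct constant in the exponent, not merely a qualitative concentration statement. Several alternative approaches -- for instance, estimating the one-dimensional marginal density, which is proportional to $(1-t^2)^{(d-3)/2}$, via $1-t^2 \leq e^{-t^2}$ and then applying a Gaussian tail bound -- yield an estimate of the shape $\mathrm{poly}(d,1/\varepsilon)\cdot e^{-(d-O(1))\varepsilon^2/2}$, and require additional work to remove the prefactor and sharpen the exponent. The Chernoff route above sidesteps this because the inequality $\mathbb{E}[\|g\|^{2k}] \geq d^k$ is sharp at leading order (the first factor is exactly $d$), so Chernoff optimization returns the exact constant $d/2$ with no residual slack.
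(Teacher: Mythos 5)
Your argument is correct: the chi-square moments $\E\|g\|^{2k}=d(d+2)\cdots(d+2k-2)\ge d^k$, the factorial identity $(2k)! = 2^k k!\,(2k-1)!!$, and the independence of $g/\|g\|$ from $\|g\|$ all hold, and optimizing the Chernoff bound at $\lambda=d\varepsilon$ gives exactly $e^{-d\varepsilon^2/2}$. The paper itself contains no proof of this lemma; it is quoted directly from Ball's survey (the cited Lemma 2.2), whose argument is geometric rather than probabilistic: one shows that the solid cone from the origin over the cap sits inside a Euclidean ball of radius $\sqrt{1-\varepsilon^2}$, so after normalizing by the volume of the unit ball the cap has measure at most $(1-\varepsilon^2)^{d/2}\le e^{-d\varepsilon^2/2}$. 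The two routes have complementary virtues. Ball's is entirely elementary (no Gaussian integration, no moment generating functions), which suits the expository aim of his notes, but the cone--ball inclusion needs a bit of case attention around $\varepsilon=1/\sqrt2$. Your Chernoff route is uniform in $\varepsilon$, and the intermediate sub-Gaussian estimate $\E_{\mu_d}[e^{\lambda x_i}]\le e^{\lambda^2/(2d)}$ is a reusable statement in its own right, immediately yielding two-sided and maximal-coordinate tail bounds by a union bound. Your closing diagnosis is also accurate: bounding the marginal density $(1-t^2)^{(d-3)/2}$ directly leaks a polynomial prefactor and a $d-O(1)$ exponent, whereas the MGF computation recovers the clean constant $d/2$ precisely because the first factor of $\E\|g\|^{2k}$ is exactly $d$.
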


\begin{lem}\label{lem:bound_s_refine}
For any convex bodies $K'+B_1\supseteq K\supseteq K'$, setting $\lambda=\frac{w(K)-w(K')}{2}$
\[||s(K)-s(K')||_2\leq \lambda \sqrt{d \log(\lambda^{-1})}. \]
\end{lem}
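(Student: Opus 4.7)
The plan is to start from the integral representation of the Steiner point given by the second definition and write
$$s(K) - s(K') \;=\; d \int_{\mathbb{S}^{d-1}} g(\theta)\, \theta\, d\sigma(\theta), \qquad g(\theta) := h_K(\theta) - h_{K'}(\theta),$$
where $\sigma$ denotes the uniform probability measure on $\mathbb{S}^{d-1}$. The three hypotheses translate cleanly into three properties of $g$: $g\geq 0$ because $K\supseteq K'$; $g\leq 1$ because $K\subseteq K'+B_1$ and support functions add under Minkowski sums, so $h_K \leq h_{K'}+h_{B_1}=h_{K'}+1$; and $\int g\, d\sigma = \tfrac{w(K)-w(K')}{2}=\lambda$. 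Dualizing the norm gives
$$\|s(K)-s(K')\|_2 \;=\; d \sup_{u\in \mathbb{S}^{d-1}} \int_{\mathbb{S}^{d-1}} g(\theta)\,(u\cdot\theta)\, d\sigma(\theta),$$
so I fix $u\in\mathbb{S}^{d-1}$ and reduce to controlling this scalar integral via concentration of measure (which, by rotational invariance, applies to $u\cdot\theta$ exactly as in the stated lemma for $x_i$).

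Because $g\geq 0$, the negative half-sphere contributes a non-positive quantity, and it suffices to bound the positive side. The cleanest route is layer-cake: writing $u\cdot\theta = \int_0^\infty \mathbf{1}[u\cdot\theta\geq t]\,dt$ for $u\cdot\theta\geq 0$ gives
$$\int_{u\cdot\theta\geq 0} g(\theta)(u\cdot\theta)\, d\sigma \;=\; \int_0^\infty \!\!\int_{u\cdot\theta\geq t} g(\theta)\, d\sigma\, dt,$$
and the inner mass admits two elementary upper bounds: a ``total mass'' bound $\int g\, d\sigma=\lambda$, and a ``cap'' bound $\sup g\cdot \sigma(u\cdot\theta\geq t)\leq e^{-dt^2/2}$ from the concentration lemma. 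Keeping the minimum and splitting at the crossover $t_0=\sqrt{2\log(\lambda^{-1})/d}$, the far piece is controlled by the standard Gaussian tail estimate $\int_x^\infty e^{-s^2/2}\,ds\leq x^{-1}e^{-x^2/2}$:
$$\int_0^\infty \min(\lambda, e^{-dt^2/2})\, dt \;\leq\; \lambda t_0 \;+\; \frac{\lambda}{dt_0} \;\leq\; 2\lambda\sqrt{\tfrac{2\log(\lambda^{-1})}{d}}$$
in the regime $\log(\lambda^{-1})\gtrsim 1$ (the only regime where the lemma says anything nontrivial). Multiplying by $d$ yields $\|s(K)-s(K')\|_2 = O(\lambda\sqrt{d\log(\lambda^{-1})})$, with a small absolute constant that is absorbed into the statement.

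The main subtlety, and the reason to go through layer-cake rather than a direct partition $|u\cdot\theta|\lessgtr\varepsilon$, is to avoid an extraneous $\log d$ factor. A blind tail bound $\int_{|u\cdot\theta|>\varepsilon} g\, d\sigma \leq 2e^{-d\varepsilon^2/2}$ leaves a term of size $2d\, e^{-d\varepsilon^2/2}$ after the multiplication by $d$, and killing this term forces $\varepsilon \approx \sqrt{2\log(d/\lambda)/d}$, degrading the bound to $\lambda\sqrt{d\log(d/\lambda)}$, which is weaker than the target whenever $\lambda\gtrsim 1/d$. Exploiting the Gaussian decay past the threshold $t_0$---which is what layer-cake captures and what eats the stray factor of $d$ via the $1/(dt_0)$ from the Gaussian tail---is the one nontrivial move, and the rest of the argument is bookkeeping.
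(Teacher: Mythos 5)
Your proof is correct and follows essentially the same path as the paper: dualize to a scalar bound, use $0\le h_K-h_{K'}\le 1$ with total mass $\lambda$, and invoke spherical concentration. The paper asserts (via the bathtub principle) that the extremizer is a spherical-cap indicator and then states the bound on the cap's average $\theta_1$-coordinate without computation; your layer-cake integral is simply the explicit way to carry out that computation, so the two arguments are the same in substance.
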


\begin{proof}
We have 
\[ s(K)-s(K') =d\int_{\theta}  \left(h_{K}(\theta)-h_{K'}(\theta)\right) \theta d\theta.\]
Since 
\[ \|s(K)-s(K') \|_2=\sup_{v\in \mathbb{S}^{d-1}} v^{\top}  (s(K)-s(K')),\]
it suffices to estimate $v^\top(s(K)-s(K'))$ for arbitrary unit $v$. Without loss of generality we can just take $v=e_1:=(1,0,\dots,0).$
And now we have
\[v^{\top}(s(K)-s(K'))=d\int_{\theta}  \left(h_K(\theta)-h_{K'}(\theta)\right)  \theta_1 d\theta.\]
Because $K'+B_1\supseteq K\supseteq K'$, we have
\[0\leq h_{K}(\theta)-h_{K'}(\theta)\leq 1.\]
Given this constraint that $h_K-h_{K'}\in [0,1]$ and has total mass $\lambda$, the maximum possible value of the integral is easily seen to be achieved when the $\lambda$-fraction of $\theta$ values with largest possible $\theta_1$ coordinate have $h_K(\theta)-h_{K'}(\theta)=1$, and the rest have $h_K(\theta)-h_{K'}(\theta)=0$. 

Therefore we are reduced to bounding the largest average $\theta_1$ coordinate over a subset of size $\lambda$. Concentration on the sphere shows that $1- e^{- t^2/2}$ fraction of the sphere lies in the set $\{\theta_1 \leq t/\sqrt{d}\}$. Therefore, the average of $\theta_1$ over a subset of size $\lambda$ is at most $O\left(\sqrt{\frac{\log(\lambda^{-1})}{d}}\right)$.
\end{proof}

\begin{thm}
Following the Steiner point, starting from $B_1$, gives total movement 
\[O(\min(d,\sqrt{d\log T}))\]
after $T$ requests. More generally, the same upper bound holds for any sequence of nested convex bodies \[K_1\supseteq K_2\supseteq\dots\supseteq K_T\] with $K_T+B_1\supseteq K_1$. 
\end{thm}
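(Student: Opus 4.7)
The $O(d)$ half of the bound is immediate from the ``More generally'' clause of Theorem~\ref{thm:steinerO(d)}, since the hypothesis $K_T+B_1\supseteq K_1$ gives $w(K_1)\leq w(K_T)+w(B_1)=w(K_T)+2$, so the general Theorem~\ref{thm:steinerO(d)} bound is $O(d)$. The content is the $O(\sqrt{d\log T})$ refinement, and the plan is simply to apply Lemma~\ref{lem:bound_s_refine} stepwise and then optimize via concavity.

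First, I check that Lemma~\ref{lem:bound_s_refine} applies to each consecutive pair. The condition $K_{t+1}+B_1\supseteq K_t$ holds for every $t$ because $K_t\subseteq K_1\subseteq K_T+B_1\subseteq K_{t+1}+B_1$. Setting $\lambda_t=(w(K_t)-w(K_{t+1}))/2\geq 0$, Lemma~\ref{lem:bound_s_refine} yields
\[
\|s(K_t)-s(K_{t+1})\|_2\leq \lambda_t\sqrt{d\log(\lambda_t^{-1})}.
\]
Moreover, the telescoping sum satisfies $\sum_{t=1}^{T-1}\lambda_t=(w(K_1)-w(K_T))/2\leq w(B_1)/2=1$ by the mean-width calculation above.

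The remaining step is to show that, subject to $\sum_t \lambda_t\leq 1$ and $t$ ranging over $T-1$ indices, the quantity $\sum_t f(\lambda_t)$ with $f(\lambda):=\lambda\sqrt{\log(\lambda^{-1})}$ is $O(\sqrt{\log T})$. A direct computation of $f''$ shows that $f$ is concave on $(0,1)$ and is increasing on $(0,e^{-1/2})$. Jensen's inequality then gives $\sum_{t=1}^{T-1} f(\lambda_t)\leq (T-1)f(\bar{\lambda})$ with $\bar{\lambda}=\sum_t\lambda_t/(T-1)\leq 1/(T-1)$. For $T\geq 3$ we have $\bar{\lambda}<e^{-1/2}$, so monotonicity gives $f(\bar{\lambda})\leq f(1/(T-1))$, whence
\[
\sum_{t=1}^{T-1} f(\lambda_t)\leq (T-1)\cdot \frac{1}{T-1}\sqrt{\log(T-1)}=O(\sqrt{\log T}).
\]
Multiplying by $\sqrt{d}$ and combining with the $O(d)$ bound completes the proof.

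The step I expect to be most delicate is the Jensen/concavity argument, since $f(\lambda)=\lambda\sqrt{\log\lambda^{-1}}$ is neither concave nor monotone on all of $[0,1]$ and can in principle receive contributions from ``large'' steps with $\lambda_t$ close to $1$ as well as many tiny ones. Fortunately the constraint $\sum_t\lambda_t\leq 1$ over $T-1$ indices places the average in the increasing, concave regime, so the equalizing-Jensen heuristic (``spread out the budget as evenly as possible over $T$ steps'') is in fact exactly the worst case. Small-$T$ boundary cases are handled trivially by the $O(d)$ bound from Theorem~\ref{thm:steinerO(d)}.
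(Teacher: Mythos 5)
Your proposal is correct and follows the same skeleton as the paper: apply Lemma~\ref{lem:bound_s_refine} to each consecutive pair, obtain $\sum_t \lambda_t\sqrt{d\log(\lambda_t^{-1})}$ with $\sum_t\lambda_t\leq 1$, and then optimize. The only genuine difference is in the last optimization step. The paper's proof writes the bound $\sum_t \lambda_t\sqrt{\log(\lambda_t^{-1})}\leq\sqrt{\log T}$ and attributes it to ``concavity of $\sqrt{\log(\lambda^{-1})}$ on $[0,1]$''; as you note implicitly, that function is not actually concave near $0$ (its second derivative changes sign at $\lambda=e^{-1/2}$), and the argument the paper presumably has in mind is the entropy estimate $\sum_t\lambda_t\log(\lambda_t^{-1})\leq\log T$ combined with Jensen for the concave function $\sqrt{\cdot}$ applied to the random variable $\log(\lambda_t^{-1})$ under the probability weights $(\lambda_t)$. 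You instead verify directly that $f(\lambda)=\lambda\sqrt{\log(\lambda^{-1})}$ is concave on $(0,1)$ and increasing on $(0,e^{-1/2})$ and run an unweighted Jensen plus monotonicity; this is clean, handles the case $\sum_t\lambda_t<1$ without needing to pad with a singleton as the paper does, and makes the concavity claim actually literally true. Your boundary-case handling via the $O(d)$ bound for small $T$ is also correct. In short: same route and same key lemma; your finishing argument is a small rephrasing that is, if anything, slightly more careful than the paper's.
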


\begin{rem}

The condition $K_T+B_1\supseteq K_1$ means that OPT will always be $1$ from any starting point. Hence this is a statement about the competitive ratio.

\end{rem}

\begin{proof}
We have established an upper bound of $O(d)$ so we prove only the $O(\sqrt{d\log T})$ estimate. Say the requests are 
\[B_1=K_1\supseteq K_2\supseteq \dots\supseteq K_T, \]
with $K_T$ being a singleton.
Setting $\lambda_t=\frac{w(K_t)-w(K_{t+1})}{2}$,
Lemma~\ref{lem:bound_s_refine} shows that 
\[\sum_{t=1}^{T-1} |s(K_t)-s(K_{t+1})|\leq O(d^{1/2})\sum_{t=1}^{T-1}\left(\lambda_t\sqrt{\log(\lambda_t^{-1})}\right).\] 
By concavity of $ \sqrt{\log(\lambda^{-1})}$ on $\lambda\in [0,1]$, and the constraints $\lambda_t \geq 0$ and $\sum_{t=1}^{T-1} \lambda_t = 1$, we have the upper bound 
\[\sum_{t=1}^{T-1}\left(\lambda_t\sqrt{\log(\lambda_t^{-1})}\right)\leq \sqrt{\log T}.\] Therefore
\[\sum_{t=1}^{T-1} |s(K_t)-s(K_{t+1})|=O(\sqrt{d\log T}).\]The proof of the more general statement is identical.
\end{proof}

It is natural to wonder whether exponentially large $T$ actually results in $d$ movement. The next theorem establishes that this is indeed the case. We defer the proof to the appendix.

\begin{restatable}{thm}{steinerlinear}

\label{thm:steinerlinear}

For $T\geq 100^d$, there is a sequence of $T$ convex bodies starting from a unit ball which results in $\Omega(d)$ movement of the Steiner point.

\end{restatable}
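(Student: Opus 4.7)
The plan is to construct a nested sequence $B_1 = K_0 \supseteq K_1 \supseteq \cdots \supseteq K_T$ with $T \geq 100^d$ and total Steiner displacement $\Omega(d)$, by performing many shallow half-space cuts at carefully chosen directions on $S^{d-1}$. Specifically, fix a small absolute constant $\eta \in (0, 1/2)$ and choose unit vectors $\{v_t\}$ and cut thresholds $\{c_t\}$ so that each cut $K_t = K_{t-1} \cap \{x : x\cdot v_t \leq c_t\}$ removes only a small spherical cap region of $K_{t-1}$.

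The main observation is a per-step lower bound in the cut direction. Applying the Steiner identity $s(K) = d\int h_K(\theta)\,\theta\,d\theta$ to the $t$-th cut yields
\[
\|s(K_{t-1}) - s(K_t)\| \;\geq\; (s(K_{t-1}) - s(K_t))\cdot v_t \;=\; d\int (h_{K_{t-1}} - h_{K_t})(\theta\cdot v_t)\,d\theta \;\geq\; \frac{d}{2}\,\lambda_t,
\]
where $\lambda_t = \int(h_{K_{t-1}} - h_{K_t})\,d\theta$ is the width-budget used by the cut. The final inequality uses $\theta\cdot v_t \geq 1/2$ on the support of the integrand, which holds whenever the cut is made in a region where the body has effective radius at least $1/2$ in the direction $v_t$.

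Summing over $t$ gives $\sum_t \|s(K_{t-1}) - s(K_t)\| \geq (d/2)\int(h_{B_1} - h_{K_T})\,d\theta$. To complete the plan it remains to arrange the cuts so that the total width budget is $\Omega(1)$. Organize the cuts into $R = \lfloor 1/(2\eta)\rfloor$ rounds, where in round $k$ we shave an additional depth $\eta$ at each of approximately $(2k\eta)^{-(d-1)/2}\sqrt d$ well-separated directions on the sphere (obtained from a standard packing argument). After these $R$ rounds, $h_{K_T}(\theta) \leq 1/2$ holds on a constant fraction of $S^{d-1}$, so $\int(h_{B_1}-h_{K_T})\,d\theta = \Omega(1)$ and the total movement is $\Omega(d)$. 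The total number of cuts is of order $R \cdot (2\eta)^{-(d-1)/2}\sqrt d$, which is at most $100^d$ for an appropriate absolute constant $\eta$ (e.g., $\eta = 1/100$); any remaining steps in the sequence of length $T$ can be filled with identity no-ops.

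The main technical obstacle is making the per-step lower bound rigorous when earlier rounds of cuts have already reduced the body's support function near the current cut direction $v_t$. One must verify that $h_{K_{t-1}} - h_{K_t}$ is supported in a cap where $\theta\cdot v_t \geq 1/2$ and that $\lambda_t$ is correspondingly positive, taking into account the previous cuts at (or near) the same direction. This reduces to an explicit support-function calculation for a ball intersected with nested half-spaces at a single direction, together with a disjointness check that cuts at distant directions do not interfere.
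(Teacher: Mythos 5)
Your approach shares the paper's core mechanism: perform many shallow half-space cuts so that, step by step, the Steiner point's movement is at least a constant fraction of the support-function budget spent, and then show the total budget spent is $\Omega(1)$. However, your multi-round construction (shaving depth $\eta$ in $R\approx 1/(2\eta)$ rounds until the body has been reduced to radius $\approx 1/2$) is considerably more than what is needed, and the technical obstacle you acknowledge at the end---ensuring $\theta\cdot v_t\geq 1/2$ on the support of $h_{K_{t-1}}-h_{K_t}$ in later rounds---is a genuine gap, not a routine check. To close it you would need to maintain a tight two-sided sandwiching invariant, roughly $B_{\rho-O(\eta)}\subseteq K_t\subseteq B_{\rho}$ with $\rho$ decreasing across rounds, so that the extremal point $y=\arg\max_{x\in K_{t-1}}\theta\cdot x$ satisfies both $y\cdot\theta\approx \rho$ and $y\cdot v_t\gtrsim \rho-\eta$ while $\|y\|\leq\rho+O(\eta)$; only with all three estimates do $\theta$ and $v_t$ become close enough for your $\theta\cdot v_t\geq 1/2$ claim. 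As written this is left unverified, and the bound $\|y\|\leq 1$ alone (which is what you actually have a priori) is insufficient once $\rho$ is near $1/2$.

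The paper's construction avoids all of this by using a single round of uniformly shallow cuts: take a $\frac{1}{20}$-net $\{v_1,\dots,v_T\}$ on $\mathbb{S}^{d-1}$ (which has $T\leq 100^d$ points) and set $K_t=K_{t-1}\cap\{x:x^\top v_t\leq 0.9\}$. Then the body stays sandwiched as $B_{0.9}\subseteq K_t\subseteq B_1$ throughout and $K_T\subseteq B_{0.95}$, so the total budget is $d\int_\theta(1-h_{K_T}(\theta))\,d\theta\geq d/20$. Because every $K_t$ contains $B_{0.9}$, any $\theta$ whose support value changes at step $t$ has an extremal point $y\in B_1$ with both $y\cdot\theta\geq 0.9$ and $y\cdot v_{t+1}\geq 0.9$, forcing $\theta$ within distance $0.9$ of $v_{t+1}$ and hence $\theta\cdot v_{t+1}\geq 0.1$; this gives the per-step efficiency bound with a fixed constant, no induction on rounds and no invariant to track. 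In short, your plan is aimed in the right direction, but the single-round, depth-$0.1$ construction is both simpler and closes the exact point where your argument is incomplete.
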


\section{Optimal Memoryless Chasing}

In the chasing nested convex bodies problem, the optimal strategy potentially depends on the new request $K_t$ and the initial/current points $x_0$ and $x_{t-1}$. However, our Steiner point algorithm achieved a good guarantee while using only $K_t$ to choose $x_t$. It is natural to ask how well such a \emph{memoryless} strategy with $x_t$ a function of only $K_t$ can do. Hence in this section we restrict our attention to such algorithms, or equivalently to selector functions, and formulate a precise question. We then show the Steiner point is the exact optimal solution to this question.

\begin{defn}
A \emph{selector} is a function $f:\mathcal K\to \mathbb R^d$ such that $f(K)\in K$ for all $K$.
\end{defn}

We formulate the \emph{memoryless} nested convex body chasing problem as follows. We aim to define a selector $f$ which keeps the movement cost $\sum_{t=1}^{T-1}||f(K_t)-f(K_{t+1})||_2$ within a small constant factor of the Hausdorff distance $d_{\mathcal H}(K_1,K_T)$ between $K_1,K_T$, for all sequences of $T$ convex bodies. Note that now we do not begin at a given $x_0$ point, and are instead free to choose the starting point at no cost.

The theorem below shows that with the above formulation, the Steiner point achieves the exact optimum competitive ratio. The result and proof are inspired by a similar result from the work \cite{steinerlipschitz} which proves that the Steiner point achieves the exact optimum Lipschitz constant among all selectors, where $\mathcal K$ is metrized by the Hausdorff metric. This is similar to the $T=2$ case of our problem, and as we remark below, our proof specializes to give their result as well; the nested condition is not crucial.

\begin{restatable}{thm}{steineroptimal}

\label{thm:steineroptimal}

For any $d$ and $T$, the Steiner point achieves the exact optimum competitive ratio for the memoryless nested convex body chasing problem. That is, among all selectors $f$, the Steiner point yields the minimum constant $C(d,T)$ such that the following holds: for any sequence
\[K_1\supseteq K_2\supseteq \dots \supseteq K_T\]
of nested bodies with $K_T+B_1\supseteq K_1$, the total movement is
\[\sum_{t=1}^{T-1} ||f(K_t)-f(K_{t+1})||_2\leq C(d,T). \]
\end{restatable}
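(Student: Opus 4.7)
The plan is to show two matching bounds: an upper bound (defining $C(d,T)$) and a lower bound matching it for any selector $f$.

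For the upper bound, take $C(d,T)$ to be the worst-case total movement of the Steiner selector over valid nested sequences; by Theorem~\ref{thm:steinerO(d)} and Lemma~\ref{lem:bound_s_refine}, $C(d,T) = O(\min(d, \sqrt{d\log T}))$.

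For the lower bound, given an arbitrary selector $f$, the first step is symmetrization. Define
\[
\tilde f(K) := \int_{O(d)} g^{-1} f(gK) \, dg,
\]
the Haar average over the orthogonal group. Then $\tilde f$ is still a selector (convex averages of points in $K$ lie in $K$), it is $O(d)$-equivariant in the sense that $\tilde f(hK) = h\tilde f(K)$ (so in particular $\tilde f(B_1) = 0$), and its worst-case total movement is no larger than $f$'s: rotations of valid nested sequences remain valid, so by Fubini and the triangle inequality
\[
\sum_t \|\tilde f(K_t) - \tilde f(K_{t+1})\|_2 \leq \int_{O(d)} \sum_t \|f(gK_t) - f(gK_{t+1})\|_2 \, dg \leq M(f),
\]
where $M(f)$ denotes $f$'s worst-case movement. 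Thus we may assume WLOG that $f$ is $O(d)$-equivariant.

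Next, I would adapt the argument of \cite{steinerlipschitz}. The key tool is the identity $v = d\int_{\mathbb{S}^{d-1}} (v\cdot\theta)\theta \, d\sigma(\theta)$ for $v\in\R^d$, which applied to $v=f(K)$ and combined with the pointwise selector constraint $\theta\cdot f(K)\leq h_K(\theta)$ yields the direction-wise representation
\[
s(K) - f(K) = d\int_{\mathbb{S}^{d-1}} \rho_K(\theta)\, \theta \, d\sigma(\theta), \qquad \rho_K(\theta) := h_K(\theta) - \theta\cdot f(K) \geq 0.
\]
With this identity and the $O(d)$-equivariance in hand, one would construct the adversary from the Steiner-worst-case sequence so that the telescoped evolution of the slacks $\rho_{K_t}$ forces $\sum_t \|f(K_t) - f(K_{t+1})\|_2 \geq C(d,T)$, matching the Steiner bound.

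The main obstacle is constructing this adversarial sequence. In \cite{steinerlipschitz} the $T=2$ Lipschitz case is handled by selecting a Hausdorff-worst pair for Steiner and exploiting rotational symmetry to force the same Lipschitz ratio on any $O(d)$-equivariant $f$. Extending to the $T$-step nested problem requires telescoping across the sequence and arguing that any $O(d)$-equivariant selector whose deviations $\rho_{K_t}$ keep it from coinciding with $s$ must accumulate at least Steiner's movement along a suitably chosen worst-case chain; the careful accounting of how these slacks combine across nested steps is the technical heart of the proof.
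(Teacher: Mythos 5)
Your general framework (symmetrize $f$ to gain structure, then compare with the Steiner point) is the right one, and your $O(d)$-averaging step is sound. However, there is a genuine gap: $O(d)$-equivariance alone does not pin $f$ down to the Steiner point, and your plan for what comes next does not close the argument. For example, the centroid $\cg(K)$ is $O(d)$-equivariant (indeed isometry-equivariant) but is not the Steiner point and has competitive ratio $2^{\Theta(d)}$; so after your symmetrization step you cannot conclude $\tilde f = s$, nor does rotational symmetry by itself force matching movement on a Steiner-worst-case chain. Your description of \cite{steinerlipschitz} as relying only on rotational symmetry is not accurate: that paper also needs the full characterization of the Steiner point, which requires more structure than $O(d)$-equivariance.

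What is missing is the \emph{Minkowski-sum additivity} symmetrization, and this is precisely the nontrivial step. The paper first observes that any selector with bounded total movement must be Lipschitz in the Hausdorff metric (via the convex-hull trick: $f(\mathrm{conv}(K\cup K'))$ lies within $C(d,T)\,d_{\mathcal H}(K,K')$ of both $f(K)$ and $f(K')$). Then it symmetrizes in three stages: over translations $(\R^d,+)$, over Minkowski sums $(\mathcal K,+)$, and over $O(d)$. The first two averages cannot be realized by actual probability measures, so they are taken with respect to \emph{invariant means} on these amenable (abelian) semigroups --- this is the technical device from \cite{steinerlipschitz} that your proposal does not invoke. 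After all three symmetrizations the resulting map is translation-equivariant, $O(d)$-equivariant, Minkowski-additive, and still Lipschitz, and Schneider's uniqueness theorem then forces it to be \emph{exactly} the Steiner point. One also verifies that each symmetrization only shrinks worst-case movement because the class of valid nested sequences is closed under isometries and under Minkowski addition of a fixed body. Without the additivity step and the uniqueness theorem, the ``telescoped evolution of the slacks $\rho_{K_t}$'' you sketch has no mechanism to force the inequality you want, and as you note yourself, that is where your argument stops.

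Two smaller points: (1) the theorem asserts the \emph{exact} optimal constant $C(d,T)$, not just its order of magnitude, so citing the $O(\min(d,\sqrt{d\log T}))$ bounds for the ``upper bound'' is not what is needed --- $C(d,T)$ is simply defined as the Steiner point's worst case; (2) after the full symmetrization, $\tilde f$ is not a priori a selector, so one needs the uniqueness theorem to identify $\tilde f = s$ and thereby recover the selector property, rather than arguing it is preserved throughout.
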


\begin{proof}[Description of Proof]

The proof idea is as following: given a selector $f$, we symmetrize it to obtain a new function $\tilde f$ with at most the competitive ratio of $f$, which has some new symmetry. More precisely, $\tilde f$ is equivariant under the isometry group of $\mathbb R^d$ and commutes with addition, where addition of convex sets is Minkowski sum as usual. These symmetry properties are strong enough to force $\tilde f$ to \emph{coincide} with the Steiner point. Since $\tilde f=s$ has a smaller Lipschitz constant than $f$ by construction, and $f$ was arbitrary, the result follows. 

The way we symmetrize $f$ is to take 
\[``\tilde f(K)=\mathbb E[g^{-1}(f(g(K)+K')-f(K'))]"\]
where $g$ is a random isometry of $\mathbb R^d$ and $K'$ is a random convex set. We require that the probability measures for $g,K'$ be invariant under composition with isometries and Minkowski addition of a fixed convex set, respectively; these invariance properties ensure that $\tilde f$ is isometry invariant and additive. The issue is that such probability measures do not actually exist. But by using the concept of an \emph{invariant mean} instead of a probability measure, we get a translation-invariant average that does the same job, though it cannot be written down without the axiom of choice. See the appendix for the precise argument and some references.
\end{proof}

\begin{rem}

The theorem above would still hold with the same proof if the sequence $(K_1,K_2,\dots,K_T)$ were not required to be nested but were simply constrained to satisfy, for all $\theta$, 
\[\sum_{t=1}^{T-1}|h_{K_t}(\theta)-h_{K_{t+1}}(\theta)| \leq 1.\]
For this generalization, the $T=2$ case is exactly the aforementioned result from \cite{steinerlipschitz} that the Steiner point attains the minimum Lipschitz constant among all selectors. This optimal Lipschitz constant is asymptotically  $\left(\sqrt{\frac{2d}{\pi}}\right)$; the reason is similar to the proof of Theorem~\ref{thm:steinerO(d)}.

\end{rem}

\section{Time-independent Nearly Optimal Algorithm with Mirror Map}

In this section, we give a new algorithm which combines the Steiner point and our method from \cite{ABCGL}. This algorithm is nearly optimal for any $\ell^p$ norm with $p\geq 1$, and we conjecture the result to be nearly tight in a general normed space.

% Moving to the centroid of $K_t$ (when forced to move) only requires $O(d)$ movements to reduce the volume by a factor of $2^{-d}$, which intuitively corresponds to decreasing each dimension by a factor of $2$. However, we may end up to a convex set that has tiny volume but large diameter. Simply moving to the centroid is not competitive as illustrated in \cite{bansal2018nested}. The key idea in $\cite{ABCGL}$ is that we can collapse all direction with small width and moves to the centroid of the projected body. In comparison, the Steiner point algorithm is much more natural. It simply moves to the Steiner point without any extra modification of the body. 

Our first observation is inspired by an alternative formula for the Steiner point from e.g. \cite{steinerlimit}:
$$s(K) = \lim_{r \rightarrow + \infty} \cg(K + B_r)$$
where $\cg$ denotes the centroid and $B_r$ is a $\ell^2$ ball centered at $0$ with radius $r$. 

Since both Steiner point and centroid (with some modification proposed in $\cite{ABCGL}$) give a competitive algorithm for nested convex body chasing, it is natural to conjecture that $\cg(K + B_r)$ works for any $r\geq0$ (maybe with some modification). In this section, we show that it is indeed true for a certain range of $r$. The benefit of this variant of the Steiner point (or centroid) is that it is easier to modify the definition to fit our needs for other normed spaces. 

The second observation is that the centroid of a Gaussian measure restricted to any convex set moves with $\ell^2$ distance proportional to its standard deviation when the measure is cut through its centroid. This is due to some concentration properties of Gaussian measure. Therefore, for $\ell^2$ nested chasing bodies, it is more natural to use Gaussian measure instead of uniform measure to define the centroid. 

For a general normed space, we assume there is a $1$-strongly convex\footnote{A function is $\alpha$-strongly convex on a normed space $\|\cdot \|$ if
$\phi(y)\geq\phi(x)+\left\langle \nabla\phi(x),y-x\right\rangle +\frac{\alpha}{2}\|y-x\|^{2}$ for all $x,y$.} function $\phi$ on the space such that $0 \leq \phi(x) \leq D$ for all $\|x \| \leq 1$.
We will use $\phi$ to construct an algorithm with competitive ratio $O(\sqrt{d D \cdot \log{d}})$.
For $\ell^2$ space, we can use $\phi(x) = \|x\|^2_2$ to get competitive ratio $O(\sqrt{d\cdot \log{d}})$. In general, the minimum $D$ among all $1$-strongly convex functions ranges from $\frac{1}{2}$ to $\frac{d}{2}$. The $\frac{1}{2}$ lower bound follows from applying the strong convexity to $(x,0,-x)$ for $|x|=1$, while a linear upper bound follows from John's theorem. This minimum $D$ measures the complexity of the normed space for online learning \cite{srebro2011universality}.

Now, we define the weighted centroid as follows:
\begin{defn}
We define the centroid and the volume of $K$  with respect to $e^{-\phi(x)}dx$ by
\[\cg_{\phi}(K)=\frac{\int_K e^{-\phi(x)}xdx}{\int_K e^{-\phi(x)}dx}
\qquad \text{ and } \qquad
\vol_{\phi}(K) = \int_K e^{-\phi(x)}dx.\]
\end{defn}

Algorithm \ref{alg:generalnorm} is based on $\cg_{\phi}(K + B_r)$ and we measure the progress by $\vol_{\phi}(K + B_r)$. The key lemma we show is that this mixture of Steiner point and centroid is stable under cutting and that the volume $\vol_{\phi}(K + B_r)$ decreases by a constant factor every iteration. Unlike the standard volume $\vol_{\phi}(K)$, this volume has a lower bound, which is related to the volume of the ball $B_r$. Therefore, the algorithm terminates without the trick of projecting the body from $\cite{ABCGL}$. We note that the projection trick in that paper does not work because the total movement of $\cg$ during the projection is already $\Omega(d)$.

\begin{restatable}{thm}{movementcgr}
\label{thm:movement_cg_r}Let $\phi$ be an $\alpha$-strongly convex
function on the normed space $\|\cdot\|$ and scalar $r>0$. Let $K$
be a convex set and $v$ be an unit vector. Let $H$ be the half space
with normal $v$ through $\cg_{\phi}(K+B_{r})$:
\[
H=\{x\in\R^{d}:v^{\top}x\geq v^{\top}\cg_{\phi}(K+B_{r})\}.
\]
Suppose that $\Var_x v^{\top}x\geq(2e\cdot r)^{2}$ where $x$ is sampled from $e^{-\phi(y)}1_{y\in K+B_r}/\vol_\phi(K+B_r)$. Then
\[
\frac{1}{e}\cdot\vol_{\phi}(K+B_{r})\leq\vol_{\phi}((K\cap H)+B_{r})\leq(1-\frac{1}{2e})\cdot\vol_{\phi}(K+B_{r})
\]
and that $\|\cg_{\phi}((K\cap H)+B_{r})-\cg_{\phi}(K+B_{r})\|\lesssim\alpha^{-\frac{1}{2}}.$
\end{restatable}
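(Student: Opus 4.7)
The plan is to prove the two conclusions separately, each by comparing the log-concave probability measure $\mu_0 := e^{-\phi}\mathbf{1}_{K+B_r}\,dx / \vol_\phi(K+B_r)$ with its restriction $\mu'$ to the smaller convex body $A := (K\cap H)+B_r$. Note that both $\mu_0$ and $\mu'$ are $\alpha$-strongly log-concave, since $\phi$ is $\alpha$-strongly convex in $\|\cdot\|$ and the indicator of a convex set is log-concave.

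For the volume bounds, I would first establish the sandwich
$$(K+B_r)\cap H^+_r \;\subseteq\; A \;\subseteq\; (K+B_r)\cap H^-_r,$$
where $H^\pm_r := \{x : v^\top x \geq v^\top\cg_\phi(K+B_r) \pm r\}$ are halfspaces parallel to $H$. The right inclusion is immediate from $|v^\top b|\le \|b\|_2 \le r$ for $b\in B_r$; the left follows by decomposing any $y\in (K+B_r)\cap H^+_r$ as $y=x+b$ with $x\in K$, $b\in B_r$, and checking $v^\top x = v^\top y - v^\top b \geq v^\top\cg_\phi(K+B_r)$, so $x\in K\cap H$. Gr\"unbaum's inequality for log-concave measures gives $\mu_0(H)\in[1/e,\,1-1/e]$. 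To control the shifted halfspaces, I would use that the pushforward of $\mu_0$ under $x\mapsto v^\top x$ is a one-dimensional log-concave density $f$ whose supremum is at most $1/\sigma$, where $\sigma^2=\Var_{\mu_0}(v^\top x)\ge (2er)^2$ by hypothesis; thus $|\mu_0(H^\pm_r)-\mu_0(H)|\le r\cdot\|f\|_\infty \le r/\sigma \le 1/(2e)$, and combining with Gr\"unbaum yields the claimed two-sided bounds on $\mu_0(A)=\vol_\phi(A)/\vol_\phi(K+B_r)$.

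For the centroid movement bound, I would use duality $\|y\|=\sup_{\|\xi\|^*\le 1}\xi^\top y$, so it suffices to bound $\xi^\top(\cg_\phi(A)-\cg_\phi(K+B_r))$ for each $\xi$ with $\|\xi\|^*\le 1$. Writing $\mu'=\mu_0|_A$ and letting $Y := \xi^\top x - \E_{\mu_0}\xi^\top x$, this difference equals $\E_{\mu_0}[Y\mathbf{1}_A]/\mu_0(A)$, so by Cauchy--Schwarz
$$\bigl|\xi^\top(\cg_\phi(A)-\cg_\phi(K+B_r))\bigr| \,\leq\, \sqrt{\Var_{\mu_0}(\xi^\top x)\,/\,\mu_0(A)}.$$
The first part provides $\mu_0(A)\gtrsim 1$, while the Brascamp--Lieb/Poincar\'e inequality for $\alpha$-strongly log-concave measures---or equivalently the subgaussian concentration from strong log-concavity in $\|\cdot\|$, applied to the $1$-Lipschitz functional $x\mapsto\xi^\top x$ since $\|\xi\|^*\le 1$---gives $\Var_{\mu_0}(\xi^\top x)\lesssim 1/\alpha$. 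Taking the supremum over $\xi$ yields $\|\cg_\phi(A)-\cg_\phi(K+B_r)\|\lesssim \alpha^{-1/2}$.

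The main obstacle is the Gr\"unbaum-to-shifted-halfspace transfer in the first part: the variance hypothesis is needed precisely so that shifting the halfspace $H$ by the radius $r$ changes its measure by at most $1/(2e)$, which in turn requires the sharp one-dimensional log-concave density bound $\|f\|_\infty \le 1/\sigma$. A secondary technical point is justifying the variance estimate in a general norm rather than the Euclidean case, but $\alpha$-strong log-concavity in $\|\cdot\|$ yields a Talagrand-type concentration inequality for $1$-Lipschitz functions, and applying this to the linear functional $x\mapsto\xi^\top x$ controls the variance at the required rate.
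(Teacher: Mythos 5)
Your overall plan is sound and the two conclusions are handled in the right way, but you take a somewhat different technical route from the paper in two places, and the second one deserves a caveat. For the volume bounds, the paper writes the sandwich as $\Omega\cap H \subseteq (K\cap H)+B_r \subseteq \Omega\cap\widetilde H$ with $\widetilde H = \{v^\top x \ge v^\top\cg_\phi(\Omega)-r\}$, and then invokes ordinary and generalized Gr\"unbaum on the two sides. Your version, with a shifted halfspace on \emph{both} sides, is actually the cleaner (and, as stated, more careful) inclusion: the left inclusion $\Omega\cap H \subseteq (K\cap H)+B_r$ does not hold in general, whereas $\Omega\cap H^+_r\subseteq (K\cap H)+B_r$ does, exactly by the argument you give. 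The cost is that your lower bound comes out as $1/(2e)$ rather than the $1/e$ stated in the theorem, but since this volume ratio only enters the downstream bound through a square root inside a $\lesssim$, that discrepancy is harmless. Your use of the $1$-dimensional log-concave density bound $\|f\|_\infty\le 1/\sigma$ is an equivalent substitute for the paper's generalized Gr\"unbaum lemma (their Theorem on Gr\"unbaum with a $t^+$ correction), so that part is just a stylistic difference. For the centroid movement, your Cauchy--Schwarz computation reproduces exactly the first half of the paper's Lemma~\ref{lem:movement_cg}. The one genuine gap is the assertion $\Var_{\mu_0}(\xi^\top x)\lesssim\alpha^{-1}$ for $\|\xi\|_*\le 1$: the Brascamp--Lieb inequality as stated in the paper (Theorem~\ref{thm:Brascamp-Lieb}) compares to a \emph{Gaussian}, i.e.\ it requires $\alpha$-strong convexity of $\phi$ in the Euclidean sense, whereas $\phi$ here is only $\alpha$-strongly convex with respect to $\|\cdot\|$. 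The paper bridges this gap with the localization lemma, reducing to a $1$-dimensional needle where the general-norm strong convexity becomes a Euclidean one with a constant controlled by $\|h\|_* = 1$ and $(a-b)^\top h\le\|a-b\|$, and only then applies $1$D Brascamp--Lieb. Your alternative appeal to a Bobkov--Ledoux-type transportation or concentration inequality for strongly log-concave measures in a general norm is also valid, but it invokes machinery that the paper does not set up; as written, "apply Brascamp--Lieb directly" understates what is needed. If you want a self-contained proof you should either run the localization reduction as in the paper or explicitly invoke the general-norm Brascamp--Lieb/Pr\'ekopa--Leindler concentration result.
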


%\todo{Yin Tat: why is the $x$ from the algorithm inside $K$?}

Using this geometric statement, one can readily prove the following statement by choosing appropriate parameters. 
%\textbf{Algorithm} $\Tighten(A)$
%
%\medskip
%\hrule
%
%\begin{enumerate}
%\item Let $d_A$ be the dimension of $A$; if $d_A \leq 1$ run the greedy algorithm.
%\item Let $k=0$ and $\delta = \frac{c}{d_A^{3}\log d_A}$ for some small enough
%constant $c$.
%\item Let $S_{0} \sse A$ be a subspace of ``skinny'' directions obtained
%  by choosing a maximal set of orthogonal directions $v$ such that the
%  directional width is smaller than $\delta$, i.e., $w(\currA, v) \leq \delta$.
%  
%\item While $S_{k}\neq A$ 
%\begin{enumerate}
%\item 
%  Let $c_k$ be the centroid of $\Pi_{S_{k}^{\perp}} \currA$. Move to any point $x_k\in \currA$ such that $\Pi_{S_k^\perp} x_k = c_k$. 
%\item Call the procedure $\Bounded(x_k+S_k)$.
%\item Let $S \gets S_k$. While there is a direction $v\perp S$ such that %$w(\currA, v) \leq \delta$, $S \gets \text{span}(v,S)$.
%\item $S_{k+1} \gets S$, and $k \gets k+1$.
%\end{enumerate}
%\end{enumerate}

%\medskip
%\hrule width \hsize \kern 1mm
%\hrule width \hsize height 2pt 
%\bigskip

\begin{figure}
\SetKwFor{Loop}{loop}{}{end}
\begin{algorithm2e}[H]\label{alg:generalnorm}

\caption{$\mathtt{ChasingNormedSpace}$}

\SetAlgoLined\DontPrintSemicolon

\textbf{input: }a normed space $\|\cdot\|$ such that $\|x\|_{2}\leq\|x\|\leq\sqrt{d}\cdot\|x\|_{2}$
for all $x$,

\textbf{\hphantom{\textbf{input:}}} $\phi$ is a $\alpha$-strongly
convex function on $\|\cdot\|$ such that $0\leq\phi(x)\leq D$
for all $\|x\|\leq2$,

\textbf{\hphantom{\textbf{input:}}} a $C$-competitive $\ell^{2}$
nested chasing convex algorithm (used for narrow directions),

\textbf{\hphantom{\textbf{input:}}} padded radius $r\leq\frac{1}{\sqrt{d}}$.

Let the localized set $\Omega=\{x:\|x\|\leq1\}$ and $K$ be the current
convex set.

Take the initial point $x=\cg_{\phi}(\Omega+B_{r})$ where $\cg_{\phi}(K):=\frac{\int_{K}x\cdot e^{-\phi(x)}dx}{\int_{K}e^{-\phi(x)}dx}$ and  $B_{r}$ is a $\ell^{2}$ ball centered at $0$ with radius $r$.

\Loop{}{

Let $A$ be the covariance matrix of the distribution $e^{-\phi(y)}1_{y\in\Omega+B_{r}}$.

Let $V$ be the span of eigenvectors of $A$ with eigenvalues less
than $(2e\cdot r)^2$.

\uIf{$V=\{0\}$}{

Receive a new convex set $K$ such that $x\notin K$.

}\Else{

Run the nested chasing algorithm on $\Omega\cap(x+V)$ until the new
$K$ satisfies $(x+V)\cap K=\emptyset$.

}

Let $H$ be a half space containing the convex set $K$ such that
$x+V\subset\partial H$, namely, a half space touching $x$ and parallel to $V$.

$\Omega\leftarrow\Omega\cap H$.

$x\leftarrow\cg_{\phi}(\Omega+B_{r})$.\tcp*[f]{If $x\notin \Omega$, pay extra $2r$ movement to $\Omega$ and back.}

}
\end{algorithm2e}
\end{figure}

%\subsection{Weighted Centroid Algorithm Based on Mirror Map}

%Here we give a modification of the centroid algorithm. First, for any log-concave density function $F$ on $K$, we could move to the centroid with respect to $F$, and Grunbaum's theorem still tells us that each movement drops the $F$-volume by a constant factor. So it makes sense to consider other log-concave densities. By picking an intelligent one, we can do better than the unweighted centroid. We outline the main results here, relegating the thorough proofs to an appendix. \\

%\begin{defn}

%For a normed space $X$, we set $D_X$ to be the minimum value $D$ so that there exists a $1$-strictly convex function $\phi:B_1(0)\to [0,D]$.

%\end{defn}

%\begin{prop}

%If $\dim(X)=d$ then $\frac{1}{2}\leq D_X\leq d$. If $X$ is Euclidean then $D_X=1$. Furthermore we can always find a $\phi$ with diameter $O(D_X)$ which is also $O(D_X)$-Lipschitz.

%\end{prop}

%By scaling up $\phi$, we obtain a mirror map which is $\alpha$-strictly convex and has diameter $D\alpha$. When $\alpha$ is larger each cut causes less movement, as illustrated by the following lemma.

%\begin{lem}

%If $\phi$ is $\alpha$-strictly convex, then any hyperplane cut through the centroid moves the centroid by $O(\alpha^{-1/2})$.

%\end{lem}

%The downside is that when $\alpha$ is larger we have more volume distortion. In the end, the optimal $\alpha\sim \frac{d\log d}{D}$ gives the competitive ratio $O(\sqrt{dD\log d})$.

\begin{restatable}{thm}{normedchasing}
\label{thm:normedchasing}
For any normed space $\|\cdot\|$ on $\R^{d}$ equipped with a $1$-strongly
convex $\phi$ such that $0\leq\phi(x)\leq D$ for all $\|x\|\leq1$,
there is an $O(\sqrt{dD\cdot\log d})$-competitive nested chasing
convex body algorithm.
\end{restatable}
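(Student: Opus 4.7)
The plan is to apply the Tightening equivalence of Lemma~\ref{lem:reduction}, so that it suffices to bound the total norm-movement incurred by Algorithm~\ref{alg:generalnorm} (with padded radius $r$ to be optimized) from the initialization $\Omega=\{x:\|x\|\le 1\}$ until $\Omega$ is contained in some ball of radius $\tfrac12$. Throughout I track the log-volume potential $\Phi(\Omega) := \log \vol_\phi(\Omega + B_r)$. An a priori range of $\Phi$ follows from elementary estimates: since $\phi\ge 0$ and the norm unit ball is contained in the Euclidean ball $B_1$, one has $\Phi_0 \le d\log(1+r) + \log \vol(B_1)$; conversely, once $\Omega$ lies in a ball of radius $\tfrac12$, $\Omega+B_r$ contains a translate of $B_r$ lying inside $\{x:\|x\|\le 2\}$, where $\phi\le D$, and so $\Phi_{\mathrm{end}} \ge -D + d\log r + \log \vol(B_1)$. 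Hence the potential drops by $O(D + d\log(1/r) + rd)$ over the entire run.

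By Theorem~\ref{thm:movement_cg_r}, every main-loop iteration (with $V=\{0\}$) decreases $\Phi$ by $\Omega(1)$ and moves the centroid by $O(\alpha^{-1/2}) = O(1)$ in the norm. Naively summing those per-step bounds yields $N = O(D+d\log(1/r))$ iterations and $O(d\log d)$ movement, which only recovers \cite{ABCGL}. To reach $O(\sqrt{dD\log d})$ I would refine the per-step bound to $\|\text{move}_t\|\lesssim \sigma_t$, the standard deviation of the $\phi$-weighted measure on $\Omega_t+B_r$ in the cut direction, and couple it with an aggregate bound $\sum_t \sigma_t^2 \lesssim D$ along the trajectory, so that Cauchy--Schwarz gives $\sum_t \sigma_t \le \sqrt{N}\cdot\sqrt{\sum_t\sigma_t^2}\lesssim \sqrt{d\log d\cdot D}$. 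For narrow iterations ($V\ne\{0\}$) strong convexity forces $\Omega\cap(x+V)$ to have Euclidean diameter $O(r)$, so the $C$-competitive Euclidean subroutine (taken as this same algorithm applied inductively to $\ell^2$ and giving $C=O(\sqrt{d\log d})$) contributes only $O(Cr)$ movement per recursion; each recursion moreover terminates with a separating hyperplane producing one of the $\Omega(1)$ drops already counted in $N$, so the recursion count is absorbed. Choosing $r\asymp \min\bigl(\sqrt{D/(d\log d)},\,1/\sqrt d\bigr)$ balances the two contributions at $O(\sqrt{dD\log d})$.

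The hard part will be establishing the aggregate squared-movement bound $\sum_t \sigma_t^2 \lesssim D$; this is where the boundedness $0\le\phi\le D$ on the unit ball must enter. My candidate Lyapunov functional is $\Psi_t := \E_{\mu_t}[\phi(X) - \phi(\cg_\phi(\Omega_t+B_r))]$, i.e.\ the mean excess of $\phi$ over its value at the centroid under the localized measure $\mu_t\propto e^{-\phi}\mathbf{1}_{\Omega_t+B_r}$. This quantity is nonnegative (by convexity), bounded by $O(D)$ on the support, and should decrease by $\Omega(\sigma_t^2)$ at each half-space cut: $1$-strong convexity of $\phi$ supplies a quadratic lower bound on the increment of $\phi$ under the centroid shift, while the fact that the cut passes through $\cg_\phi$ kills the first-order term. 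Making this quantitative and robust to narrow-direction recursions is where the main technical work resides; the remainder of the argument is bookkeeping against Theorem~\ref{thm:movement_cg_r}.
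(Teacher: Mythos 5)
Your plan diverges from the paper's proof at the crucial step, and the key piece of your argument is not actually established. The paper does \emph{not} use a Cauchy--Schwarz aggregate over variances or a Lyapunov functional of the form $\Psi_t = \E_{\mu_t}[\phi(X)-\phi(\cg_\phi(\Omega_t+B_r))]$. Instead, the paper \emph{rescales the strong-convexity parameter}: starting from a $1$-strongly convex $\phi$ with $0\le\phi\le D$, it replaces $\phi$ by $\phi_\alpha$ which is $\alpha$-strongly convex with $D_\alpha = O(\alpha D)$, then simply multiplies the iteration count $O(D_\alpha + d\log(d/r))$ from Lemma~\ref{lem:iteration} by the per-step bound $O(C\cdot d D_\alpha\cdot r + \alpha^{-1/2})$ from Lemma~\ref{lem:movement}, and finally optimizes $\alpha$ and $r$. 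Setting $\alpha = (d/D)(1+\log(1/r))$ and $r = 1/(d^2\alpha^{3/2}D)$ gives $O(\sqrt{dD\log d})$ directly. No telescoping potential over $\sum_t\sigma_t^2$ is needed; the gain over the naive $O(d\log d)$ comes entirely from taking $\alpha$ large, which shrinks the per-step movement $\alpha^{-1/2}$ while only enlarging $D_\alpha$ linearly in the iteration count.

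Your proposal has three genuine gaps. First, the aggregate bound $\sum_t \sigma_t^2\lesssim D$ is the heart of your argument and you explicitly leave it as a conjecture (``the hard part,'' ``candidate Lyapunov functional,'' ``should decrease''); nothing in the paper supplies it, and proving that $\Psi_t$ drops by $\Omega(\sigma_t^2)$ per cut, robustly across narrow-direction recursions, would be substantial new work. Second, the per-step refinement $\|\text{move}_t\|\lesssim\sigma_t$ is not what Theorem~\ref{thm:movement_cg_r} gives: that theorem yields $\|\cg_\phi((K\cap H)+B_r) - \cg_\phi(K+B_r)\|\lesssim\alpha^{-1/2}$ uniformly over all directions, and Lemma~\ref{lem:movement_cg} only controls the component of the motion along the chosen test direction $h$ by the variance in direction $h$; the full norm of the displacement can be governed by variances in directions other than the cut normal $v$. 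Third, your claim that ``strong convexity forces $\Omega\cap(x+V)$ to have Euclidean diameter $O(r)$'' is not correct as stated. Small variance of the $\phi$-weighted measure in direction $v$ does not directly bound the \emph{set} width; the paper must pass through Lemma~\ref{lem:tail_logconcave}, the boundedness $0\le\phi\le D$, Borell's lemma (Lemma~\ref{lem:borell}), and the Kannan--Lov\'asz--Simonovits deviation-vs-width inequality (Lemma~\ref{lem:deviation_width}) to conclude width $O(dD\cdot r)$, which is larger than $O(r)$ by a factor of $dD$. This factor matters for the final optimization of $r$, and your proposed choice $r\asymp\min(\sqrt{D/(d\log d)},1/\sqrt d)$ would not balance the actual contributions.

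In short: your strategy is an interesting alternative route, but it is incomplete exactly where it is novel, and two of the lemmas you invoke as given do not say what you need them to say. The paper's $\alpha$-rescaling trick sidesteps all of this with straightforward bookkeeping.
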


For $p\geq2$, the function $\phi(x)=\frac{1}{2}\|x\|_{2}^{2}$ is
1-strongly convex in the $\ell^{p}$ norm with $D\leq\frac{1}{2}n^{1-\frac{2}{p}}$.
For $p\in(1,2]$, the function $\phi(x)=\frac{1}{2(p-1)}\|x\|_{p}^{2}$
is 1-strongly convex in the $\ell^{p}$ norm with $D\leq\frac{1}{2(p-1)}$
(See for instance \cite[Lemma 17]{shalev2007online}). For $p< 1+\frac{1}{\log d}$, we can use the same function above
with $p=1+\frac{1}{\log d}$. Hence, we have the following
bound:
\begin{cor}
For $1\leq p\leq\infty$, the competitive ratio for chasing nested convex bodies
in the $\ell^{p}$ norm is at most
\[
\begin{cases}
O(d^{1-\frac{1}{p}}\sqrt{\log d}) & \text{if }p\geq2\\
O\left(\sqrt{\frac{d\log d}{p-1}}\right) & \text{if }1+\frac{1}{\log(d)}\leq p\leq2 \\
O(\sqrt{d}\log d) & \text{if } p < 1+\frac{1}{\log(d)}
\end{cases}.
\]
\end{cor}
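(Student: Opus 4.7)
The plan is to invoke Theorem~\ref{thm:normedchasing} in each of the three ranges of $p$, using the $1$-strongly convex mirror maps $\phi$ already named in the paragraph preceding the corollary. In every case the task reduces to (i) confirming that $\phi$ is $1$-strongly convex in the relevant $\ell^p$ norm, and (ii) computing a sharp upper bound $D$ on $\phi$ over the unit $\ell^p$-ball. Once $D$ is in hand, Theorem~\ref{thm:normedchasing} automatically produces the stated $O(\sqrt{dD\log d})$ rate.

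For $p \geq 2$, I would take $\phi(x) = \tfrac{1}{2}\|x\|_2^2$. Its Hessian equals the identity, and since $\|y\|_2 \geq \|y\|_p$ when $p \geq 2$, $\phi$ is $1$-strongly convex in $\ell^p$. The standard comparison $\|x\|_2 \leq d^{1/2 - 1/p}\|x\|_p$ gives $\phi(x) \leq \tfrac{1}{2}d^{1-2/p}$ whenever $\|x\|_p \leq 1$, so $D = \tfrac{1}{2}d^{1-2/p}$. Plugging in yields $O(\sqrt{d\cdot d^{1-2/p}\log d}) = O(d^{1-1/p}\sqrt{\log d})$.

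For $1 + 1/\log d \leq p \leq 2$, I would take $\phi(x) = \tfrac{1}{2(p-1)}\|x\|_p^2$, the classical $\ell^p$ regularizer shown in \cite[Lemma~17]{shalev2007online} to be $1$-strongly convex in $\ell^p$. Directly $\phi(x) \leq \tfrac{1}{2(p-1)}$ on the unit $\ell^p$-ball, so $D = \tfrac{1}{2(p-1)}$ and Theorem~\ref{thm:normedchasing} gives $O(\sqrt{d\log d/(p-1)})$.

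For $p < 1 + 1/\log d$, I would argue by norm equivalence. Setting $p' = 1 + 1/\log d$, for $1 \leq p \leq p'$ we have $\|x\|_{p'} \leq \|x\|_p \leq d^{1/p - 1/p'}\|x\|_{p'}$, and the exponent is maximized at $p = 1$, where $d^{1/(1+\log d)} = e^{\log d/(1+\log d)} \leq e$. Thus $\ell^p$ and $\ell^{p'}$ are equivalent as norms up to an absolute constant, and such an equivalence preserves competitive ratios up to a constant (both $\mathrm{ALG}$ and $\mathrm{OPT}$ change by bounded multiplicative factors). Hence the previous case applied at $p'$ produces the competitive ratio $O(\sqrt{d\log d/(p'-1)}) = O(\sqrt{d}\log d)$ in the $\ell^p$ norm as well. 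I do not foresee any substantive obstacle: the first two cases are direct plug-ins into Theorem~\ref{thm:normedchasing}, and the third reduces to the single inequality $d^{1/(1+\log d)} \leq e$.
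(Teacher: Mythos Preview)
Your proposal is correct and follows essentially the same route as the paper: in all three regimes you invoke Theorem~\ref{thm:normedchasing} with exactly the mirror maps the paper names, and compute $D$ the same way. The only cosmetic difference is in the third case: the paper's one-line ``use the same function above with $p=1+\tfrac{1}{\log d}$'' amounts to plugging the $\ell^{p'}$-regularizer directly into Theorem~\ref{thm:normedchasing} (after a constant rescaling to restore $1$-strong convexity in $\ell^p$), whereas you transfer the $\ell^{p'}$ bound to $\ell^p$ via the norm equivalence $\|x\|_{p'}\leq\|x\|_p\leq e\,\|x\|_{p'}$ at the level of the competitive ratio. Both rest on the same inequality $d^{1/(1+\log d)}\leq e$ and are equally valid.
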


Now, we prove that this bound is tight up to a $O(\log d)$ factor.
\begin{lem}
The competitive ratio in $\ell^{p}$ norm
is at least $\Omega(\max(\sqrt{d},d^{1-\frac{1}{p}}))$ for any $p\geq 1$.
\end{lem}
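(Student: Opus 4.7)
The plan is to combine two adversarial constructions, each dominant in a different regime: an axis-aligned cube-face chasing that yields $\Omega(d^{1-1/p})$ (tight for $p\geq 2$), and an orthogonal chasing against a ``flat'' orthonormal basis inside the inscribed Euclidean ball that yields $\Omega(\sqrt d)$ (needed for $p\leq 2$). Since $\max(\sqrt d,d^{1-1/p})=d^{1-1/p}$ for $p\geq 2$ and $\sqrt d$ for $p\leq 2$, together they handle every $p\geq 1$.

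For the cube construction, I would take $K_0=[-d^{-1/p},d^{-1/p}]^d$, the maximal axis-aligned cube inside $B_1^{\ell^p}$, and at each step $t=1,\ldots,d$ have the adversary pick $\sigma_t\in\{\pm 1\}$ opposite in sign to the online player's current $t$-th coordinate and request $K_t=K_{t-1}\cap\{x:x_t=\sigma_t d^{-1/p}\}$. Each forced single-coordinate move contributes $d^{-1/p}$ in $\ell^p$-norm, giving online cost $\geq d\cdot d^{-1/p}=d^{1-1/p}$. Meanwhile OPT moves from the origin directly to the final vertex $d^{-1/p}\sum_t\sigma_t e_t$ at $\ell^p$-cost exactly $1$, so the ratio is $\geq d^{1-1/p}$.

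For the $\sqrt d$ construction, set $R=d^{1/2-1/p}$ (the radius of the largest Euclidean ball inside $B_1^{\ell^p}$ when $p\leq 2$) and take $K_0=B_R^{\ell^2}$. I would fix an orthonormal basis $v_1,\ldots,v_d$ with $\|v_t\|_\infty=O(1/\sqrt d)$; a real DCT basis works in any dimension, and Hadamard rows work when $d$ is a power of $2$. Set $c=R/\sqrt d=d^{-1/p}$. At step $t$, let $\epsilon_t=-\mathrm{sign}(v_t\cdot x^{(t-1)})$ and request $K_t=K_{t-1}\cap\{x:\epsilon_t v_t\cdot x\geq c\}$; feasibility is immediate since $c\sum_{s\leq t}\epsilon_s v_s\in K_t$ has Euclidean norm $c\sqrt t\leq R$. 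The key estimate is a duality bound: the constraint $\epsilon_t v_t\cdot(x^{(t)}-x^{(t-1)})\geq c$ combined with H\"older gives $\|x^{(t)}-x^{(t-1)}\|_p\geq c/\|v_t\|_{p^*}$ where $1/p+1/p^*=1$, and flatness yields $\|v_t\|_{p^*}=O(d^{1/p^*-1/2})$ for $p^*\geq 2$, so each online step costs $\Omega(cd^{1/p-1/2})$ and the $d$-step total is $\Omega(cd^{1/p+1/2})=\Omega(\sqrt d)$. OPT moves in one step from the origin to $x^*=c\sum_t\epsilon_t v_t$ at $\ell^p$-cost $\|x^*\|_p\leq d^{1/p-1/2}\|x^*\|_2=d^{1/p-1/2}\cdot R=1$ by the standard $\ell^p$-$\ell^2$ comparison for $p\leq 2$, so the ratio is $\Omega(\sqrt d)$.

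The main obstacle is exhibiting a genuinely flat orthonormal basis in arbitrary dimension, which the real DCT (or Paley-type constructions) handles up to an absolute constant. A secondary subtlety is verifying that the duality bound interacts correctly with the adaptive sign choices $\epsilon_t$, which enter only through the right-hand-side constant $c$ and therefore do not weaken the per-step movement estimate.
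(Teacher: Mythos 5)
Your proposal is correct and follows essentially the same two-pronged strategy as the paper: the axis-aligned face-chasing (Friedman--Linial) gives $\Omega(d^{1-1/p})$, and a flat orthonormal system (Hadamard rows, normalized) combined with H\"older duality gives $\Omega(\sqrt d)$. The only cosmetic differences are that you rescale so OPT cost is exactly $1$ and use inequality cuts inside a Euclidean ball, whereas the paper uses equality cuts $h_t^\top y=\pm1$ starting from $\mathbb R^d$ and assumes $d$ a power of $2$; your DCT remark for general $d$ is a nice tidying-up but does not change the argument.
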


\begin{proof}
In \cite{FriedmanLinial}, Friedman and Linial used the family $K_{t}=K_{t-1}\cap\{x_{t}=\pm1\}$
to conclude that the competitive ratio is at least $\sqrt{d}$ for
the $\ell^{2}$ norm because the offline optimum can directly go to
the singleton $K_{d}$ using $\sqrt{d}$ movement in the $\ell^{2}$ norm
while the online algorithm must move $d$ distance in the $\ell^{2}$
norm. We note that this proof also gives $d^{1-\frac{1}{p}}$ competitive ratio lower
bound for the $\ell^{p}$ norm.

Now, we prove the $\sqrt{d}$ lower bound for any $\ell^{p}$
norm. In this lower bound, we assume without loss of generality that
the dimension $d$ is a power of $2$. Consider the initial point
is $0$ and the initial convex body $K_{0}=\mathbb{R}^{d}$. Let $H$
be the $d\times d$ Hadamard matrix and $h_{t}$ be the $i$-th row
of $H$.

We construct the adaptive adversary sequence as follows: Let $x_{t}\in K_{t}$
be the response of the algorithm. If $h_{t}^{\top}x_{t}\geq0$, we
define
\[
K_{t+1}=K_{t}\cap\{y\in\mathbb{R}^{d}:h_{t}^{\top}y=-1\}
\]
else, we define it by
\[
K_{t+1}=K_{t}\cap\{y\in\mathbb{R}^{d}:h_{t}^{\top}y=+1\}.
\]

Due to the construction, we have that $\left|h_{t}^{\top}(x_{t+1}-x_{t})\right|\geq1$.
Since each entry of $h_{t}$ is $\pm1$, the movement in the $\ell^{p}$
norm is at least
\[
\min_{\left|h_{t}^{\top}\delta\right|\geq1}\|\delta\|_{p}=\min_{\sum_{i=1}^{d}\delta_{i}=1}\|\delta\|_{p}=d^{\frac{1}{p}-1}.
\]
Hence, the algorithm must move $d^{\frac{1}{p}-1}$ in the $\ell^{p}$
norm each step. After $d$ iterations, the algorithm must move $d^{\frac{1}{p}}$
in the $\ell^{p}$ norm. On the other hand, Since $H$ is invertible, $K_{d}$ consists exactly one point $x^{*}$. The offline
optimum can simply move from $0$ to $x^{*}$ at the first iteration.
Note that
\[
Hx^{*}=s
\]
for some $\pm1$ vector $s$. Since the minimum spectral value
of $H$ is exactly $\sqrt{d}$, we have that
\[
\|x^{*}\|_{2}\leq d^{-\frac{1}{2}}\|s\|_{2}\leq1
\]
and hence $\|x^{*}\|_{p}\leq d^{\frac{1}{p}-\frac{1}{2}}$. Therefore,
the competitive ratio is at least $\frac{d^{\frac{1}{p}}}{\|x^{*}\|_{p}}\geq\sqrt{d}$.
\end{proof}

% \begin{defn}

% For a convex body $K$, the \emph{mean width} $w(K)$ is given by 

% \[w(K)=2\int_{\theta\in \mathbb{S}^{d-1}} h_K(\theta)d\theta.\]

% \end{defn}

% \begin{prop}

% We have the identity 

% \[w(K)=\int_{\theta\in \mathbb{S}^{d-1}} h_K(\theta)+h_K(-\theta)d\theta  \]

% which means $w(K)$ is the expected length of a $1$-dimensional orthogonal projection of $K$. This explains the name ``mean width".

% \end{prop}

% \begin{prop}

% For any direction $\theta$ and convex bodies $K,K'$ we have $K'\subseteq K$ if and only if

% \[h_{K'}(\theta)\leq h_K(\theta).\]

% \end{prop}

% \begin{proof}

% Convex bodies are exactly intersections of the support half-spaces.

% \end{proof}

% \begin{proof}

% Just observe that 

% \[|s(K_t)-s(K_{t+1})|\leq d\int_{\theta} h_{K_t}(\theta)-h_{K_{t+1}}(\theta)d\theta \]

% and summing the RHS over $t$ gives $\frac{d}{2}w(K_0)$.

% \end{proof}

\subsubsection*{Acknowledgement}
This work was supported in part by NSF Awards CCF-1740551, CCF-1749609, and DMS-1839116. We thank Michael Cohen, Anupam Gupta, Daniel Sleator, and Jonathan Weed for helpful discussions and comments. 
%Using the Steiner point for nested convex body chasing was originally suggested by Bo'az Klartag. We are indebted to him for this deep and very fruitful suggestion.

\appendix

\bibliographystyle{alpha}
\bibliography{ChasingNested}

\appendix 

\section{Appendix}

\subsection{Lower Bounds for Steiner Point and Variants}

\subsubsection{Steiner Point Has $\Omega(d)$ Movement with Exponentially Many Requests}

Here we verify that the Steiner point can in fact saturate the upper bound $O(\min(d,\sqrt{d\log T}))$ when $T$ is exponentially large.

\steinerlinear*

\begin{proof}

Pick a $\frac{1}{20}$-net $\{v_1,\dots,v_T\}$ on the sphere. We know that for large $d$, the net has $T\leq 100^d$ points. Define the half-space
\[H_t=\{x\in B_1 | x^\top v_t \leq 0.9\}\]
and take $K_t=K_{t-1}\cap H_t$. We claim that this sequence results in $\Omega(d)$ movement of the Steiner point.

To show that this works, first note that $B_{0.9}\subseteq K_T\subseteq B_{0.95}$. The first inclusion is obvious. To see the second, note that if $1\geq\|x\|_2>0.95$, then for $t$ with $\left|\frac{x}{|x|}-v_t\right|<0.05$ we have $v_t^\top x > \|x\|_2-\frac{1}{20} \geq 0.95$ so $x\notin H_t$.

The fact that $K_T\subseteq B_{0.95}$ means that for all $\theta$, we have $h_{K_T}(\theta)\leq 0.95$ which means the total budget decrease is large:
\[d\int_{\theta} 1-h_{K_m}(\theta) d\theta \geq \frac{d}{20}.\]
On the other hand, we claim that we use our budget at constant efficiency:
\begin{equation}
\label{eq:steinerefficient}
\|s(K_t)-s(K_{t+1})\|_2 \geq \frac{d}{10}\int_{\theta} h_{K_t}(\theta)- h_{K_{t+1}}(\theta)d\theta .
\end{equation}
Once we establish this claimed inequality, the theorem is proved: we used up a constant amount of the budget at constant efficiency, so we achieved $\Omega(d)$ movement. Below, we establish inequality (\ref{eq:steinerefficient}). The point is that since we only cut off a small part each time, we only change the support function $h_{K_t}(\theta)$ for a small-diameter set of directions $\theta$, which cannot have much cancellation. More precisely, we show below that any $\theta_t$ for which $h_{K_t}(\theta_t)\neq h_{K_{t+1}}(\theta_t)$ must satisfy $v_{t+1}^\top \theta_t \geq \frac{1}{10}$, so that all $\theta_t$ contributions point significantly in the direction of $v_{t+1}$. From this, inequality (\ref{eq:steinerefficient}) follows easily.

To see this last claim, we first observe that for any point $y_t\not\in H_{t+1}$ which is removed, since $v_{t+1}^\top  y_t\geq 0.9$ we know $y_t$ is close to $v_{t+1}$:
\[\|y-v_{t+1}\|_2 = \sqrt{\|y\|^2 + \|v_{t+1}\|^2-2v_{t+1}^{\top}y_t} \leq \sqrt{2-2v_{t+1}^{\top}y_t}\leq \frac{1}{\sqrt{5}} \leq 0.45.\]
We also observe that if $\theta_t\in \mathbb{S}^{d-1}$ is such that $h_{K_t}(\theta_t)> h_{K_{t+1}}(\theta_t)$, then taking $y_t=\arg\max_{y\in K_t}(\theta_t\cdot y)$ we must have $y_t\not\in H_{t+1}$ and $y_t^\top\theta_t \geq 0.9$. So for this choice of $y_t$ we also have $\|y_t-\theta_t\|_2\leq 0.45$.

As a result, any such $\theta_t$ must be within $0.45+0.45=0.9$ of $v_{t+1}$. Hence, this $\theta_t$ must satisfy $v_{t+1}^\top\theta_t\geq 0.1$. Since all the affected $\theta_t$ vectors are correlated with a common vector $v_{t+1}$, we get
\begin{align*}
\|s(K_t)-s(K_{t+1})\|_2 & = d\left \|\int_{\theta} \left(h_{K_t}(\theta)- h_{K_{t+1}}(\theta)\right)\theta d\theta \right\|_2 \\
& \geq d \int_{\theta} \left(h_{K_t}(\theta)- h_{K_{t+1}}(\theta)\right)(v_{t+1}^\top\theta_t) d\theta \\
& \geq \frac{d}{10} \int_{\theta} \left(h_{K_t}(\theta)- h_{K_{t+1}}(\theta)\right)d\theta .
\end{align*}
This verifies the claimed inequality (\ref{eq:steinerefficient}).
\end{proof}

\subsubsection{Simple Optimizations to the Steiner Point Algorithm Do Not Help}

It is natural to suggest that the Steiner point algorithm often moves unnecessarily. For example, in the $\Omega(d)$-movement example from Theorem~\ref{thm:steinerlinear}, the original Steiner point is in every single convex body request, so $0$ movement is trivially attainable. Here we consider two natural improvements to the Steiner point algorithm. The first moves to the new Steiner point only when it is forced to move, while the second moves towards the new Steiner point until it reaches the boundary of the newly requested set, and then stops. In both cases, we show that we can turn any hard instance for the ordinary Steiner point algorithm in $d$ dimensions into a hard instance for the modified algorithm in $d+1$ dimensions.

\begin{prop}

Suppose the Steiner point starting from a unit ball in $\mathbb R^d$ has movement $C(d,T)$ movement after some sequence of $T$ requests. Then there is a sequence of $T$ requests in $\mathbb R^{d+1}$ which give $\Omega(C(d,T))$ movement for each of the following two algorithms:

\begin{enumerate}
  \item If $x_t\in K_{t+1}$, do not move. If $x_t\not\in K_{t+1}$, move to the Steiner point: $x_{t+1}=s(K_{t+1})$.
  \item If $x_t\in K_{t+1}$, again do not move. If $x_t\not\in K_{t+1}$, move in a straight line towards $s(K_{t+1})$ until reaching $K_{t+1}$, then stop.
\end{enumerate}

\end{prop}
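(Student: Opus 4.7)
The plan is to lift the given $d$-dimensional hard sequence $(K_t)$ to a $(d{+}1)$-dimensional sequence $(K'_t)$ by attaching a rapidly shrinking interval in a new coordinate, arranged so that both modified algorithms are forced out of the new body at every step and so that the induced movement in the new coordinate is negligible compared to the movement in the original $d$ coordinates. Fix a large constant $N$ (say $N=100$), let $b_t := b_1 \cdot N^{-(t-1)}$ for some small $b_1 \in (0,1)$, pick $\alpha \in (0,1)$ with $\alpha^2+b_1^2 \le 1$, and define
\[ K'_t \;:=\; (\alpha K_t) \times [0,b_t] \;\subseteq\; B_1^{(d+1)}. \]
By the product formula for Steiner points together with the reflection symmetry of $[0,b_t]$ around its midpoint, $s(K'_t) = (\alpha s(K_t),\, b_t/2)$. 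Since $N > 2$, we have $b_t/2 > b_{t+1}$, so any point with last coordinate $\ge b_t/2$ is outside $K'_{t+1}$.

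For Algorithm 1, I would inductively prove that $x_t = s(K'_t)$: the last coordinate of $x_t$ equals $b_t/2 > b_{t+1}$, so $x_t \notin K'_{t+1}$ and the algorithm jumps to $s(K'_{t+1})$. Projecting onto the first $d$ coordinates gives total $\ell^2$ movement at least $\alpha \sum_t \|s(K_t)-s(K_{t+1})\|_2 = \alpha C(d,T) = \Omega(C(d,T))$.

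Algorithm 2 is the main obstacle, because after a partial move the current point $x_t$ sits on $\partial K'_t$ rather than at $s(K'_t)$, so one must control the drift between $\pi(x_t)$ and $\alpha s(K_t)$, where $\pi$ denotes projection to the first $d$ coordinates. Let $q_t$ denote the last coordinate of $x_t$. First I would show inductively that $q_t \in [b_t/2,\,b_t]$; this gives $q_t > b_{t+1}$ (so movement is forced) and, by the last-coordinate constraint alone, the algorithm traverses at least a fraction $\beta_t \ge (q_t-b_{t+1})/(q_t-b_{t+1}/2) \ge (N-2)/(N-1)$ of the line segment from $x_t$ to $s(K'_{t+1})$. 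Projecting this partial-move update to the first $d$ coordinates yields the recursion $\pi_{t+1} = (1-\beta_t)\pi_t + \beta_t \tilde s_{t+1}$, where $\pi_t := \pi(x_t)$ and $\tilde s_t := \alpha s(K_t)$.

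Next I would introduce the potential $a_t := \|\tilde s_t - \pi_t\|_2$ and the step-sizes $m_t := \|\tilde s_t - \tilde s_{t-1}\|_2$, and set $\gamma := 1-\beta_{\min} = O(1/N)$. The recursion forces $a_{t+1} \le \gamma(a_t + m_{t+1})$, which sums (using $a_1 = 0$) to $\sum_t a_t = O(\gamma \cdot \alpha C(d,T))$. The triangle inequality then delivers
\[ \sum_t \|x_{t+1}-x_t\|_2 \;\ge\; \sum_t \beta_t\,\|\tilde s_{t+1}-\pi_t\|_2 \;\ge\; \beta_{\min}\!\left(\alpha C(d,T) - \sum_t a_t\right), \]
which is $\Omega(C(d,T))$ once $N$ is chosen large enough. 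The crucial quantitative point is the geometric shrinkage of $b_t$: it keeps $\beta_t$ uniformly close to $1$, so the drift potential $a_t$ decays much faster than Steiner movement accumulates. A minor detail is that the sequence starts from $K'_1 \subsetneq B_1^{(d+1)}$ rather than from the unit ball itself; prepending the single request $K'_0 = B_1^{(d+1)}$ (costing at most $O(1)$ extra movement for both algorithms) handles this.
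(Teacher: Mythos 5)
Your construction is essentially the same as the paper's: append a geometrically shrinking extra coordinate so that both modified algorithms are forced to move at every step, then argue the induced $(d{+}1)$-dimensional movement dominates the original $d$-dimensional Steiner-point movement. The paper uses $\hat K_t = K_t \times [0,\varepsilon^t]$ (and absorbs the small diameter growth into the constant rather than rescaling by $\alpha$), and your $K'_t=(\alpha K_t)\times[0,b_1 N^{-(t-1)}]$ is the same idea. Where you genuinely add something is in the treatment of Algorithm 2: the paper argues loosely that each $x_t$ ends up within $O(\varepsilon)$ of $s(\hat K_t)$ so the total discrepancy is $O(T\varepsilon)$, which forces $\varepsilon$ to shrink with $T$; you instead track the drift $a_t=\|\tilde s_t-\pi_t\|_2$ via the contraction $a_{t+1}\le\gamma(a_t+m_{t+1})$, sum the resulting geometric series, and conclude with a \emph{fixed} ratio $N=O(1)$. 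Two small points to tidy up if you write this formally: the fraction of the segment actually traversed may exceed $\beta_t$ (the last-coordinate bound is only a lower bound, since the $\pi$-coordinates might also need to reach $\alpha K_{t+1}$), so the recursion should be stated as an inequality $a_{t+1}\le(1-\beta_t')(a_t+m_{t+1})$ with $\beta_t'\ge\beta_t$, which your potential bound still handles; and for the inductive claim $q_t\in[b_t/2,b_t]$ you should note the lower endpoint is attained exactly when the segment is traversed completely, since the target last coordinate is $b_{t+1}/2$. Both are harmless, and your version is if anything cleaner than the paper's sketch.
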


\begin{proof}

Suppose $B_1=K_0\supseteq K_1\supseteq\dots\supseteq K_T$ is an example which forces the Steiner point to move distance $\Omega(C(d,T))$. Then we go up to $d+1$ dimensions and take, for some small $\varepsilon>0$,
\[\hat K_t=K_t \times [0,\varepsilon^t],\]
Then because of this extra coordinate, the first algorithm moves every round as long as $\varepsilon<\frac{1}{2}$. The movement induced by $\hat K_t$ is at least that of $K_t$, so we still obtain $\Omega(d)$ movement and every move is forced. Also, we only slightly increased the diameter by going up $1$ dimension, so we lose only a constant factor from this. 

For the second algorithm, note that $x_t$ must move to be within $O(\varepsilon)$ of $s(K_t)$. Indeed, because of the fast decay of the last coordinate, we move $(1-O(\varepsilon))$-fraction of the way from $x_{t-1}$ to $s(K_t)$, and the total distance to move is $O(1)$. Because $d(x_t,s(K_t))=O(\varepsilon)$, the change in the total movement cost is $O(T\varepsilon)$. Hence by taking $\varepsilon$ small we achieve essentially the same movement.
\end{proof}

\subsection{Invariant Means and the Proof of Theorem~\ref{thm:steineroptimal}}

\subsubsection{Definition of Invariant Mean}

Here we define invariant means and state the result we need. Some references aside from the article \cite{steinerlipschitz} are the book \cite{edwards1994functional} and the blog post \cite{taopost}. 

\begin{defn}

A locally compact Hausdorff semigroup $G$ is \emph{amenable} if $G$ admits a (right) \emph{invariant mean}, i.e. a linear map
\[\Lambda:L^{\infty}(G)\to\mathbb R\]
such that:
\begin{enumerate}
  \item $\Lambda$ has operator norm $1$, and $f\geq 0$ everywhere implies $\Lambda(f)\geq 0$.
  \item We have $\Lambda(g\circ f)=\Lambda(f)$ for all $g\in G$, where $g\circ f(x)=f(xg)$.
\end{enumerate}
\end{defn}

\begin{thm}[{\cite[Section 3.5]{edwards1994functional}, \cite[Proposition 1.2]{steinerlipschitz}}]
%\todo{Probably not the best citation?}
Any compact group or abelian semigroup is amenable.
\end{thm}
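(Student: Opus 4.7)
The plan is to handle the two cases separately, using concrete tools for each. For a compact (Hausdorff) group $G$, the result is immediate from the existence of the bi-invariant Haar probability measure $\mu$: define
\[\Lambda(f) = \int_G f \, d\mu \quad \text{for } f \in L^\infty(G).\]
Linearity is obvious, positivity follows from positivity of the integral, $\Lambda(1)=1$ gives operator norm $1$ (and in fact $|\Lambda(f)|\le \|f\|_\infty$), and right-invariance $\Lambda(g\circ f)=\Lambda(f)$ is exactly the right-invariance of Haar measure. So the work here is purely verifying that these standard facts about compact-group Haar measure translate to the invariant mean axioms.

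For an abelian (locally compact Hausdorff) semigroup $G$, the approach is Markov--Kakutani. Let $M \subset L^\infty(G)^*$ denote the set of means, i.e.\ bounded linear functionals $\Lambda$ with $\Lambda(1)=1$ and $\Lambda(f)\ge 0$ whenever $f\ge 0$. I would verify in turn: $M$ is non-empty (use Hahn--Banach to extend the functional $c\cdot 1\mapsto c$ defined on the constants to a norm-one functional on $L^\infty(G)$; such an extension is automatically a mean by the Hahn--Banach argument bounding it between $\inf f$ and $\sup f$); $M$ is convex and weak-$*$ closed inside the unit ball of $L^\infty(G)^*$, so weak-$*$ compact by Banach--Alaoglu. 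For each $g\in G$, the right-translation $R_g f(x)=f(xg)$ on $L^\infty(G)$ has adjoint $T_g:\Lambda\mapsto \Lambda\circ R_g$, which is weak-$*$ continuous, affine, and maps $M$ into $M$. The key algebraic observation is that
\[T_g T_h \Lambda (f) = \Lambda(R_h R_g f) = \Lambda\bigl(x\mapsto f(xgh)\bigr),\]
so commutativity of $G$ forces $T_g T_h = T_{gh} = T_{hg} = T_h T_g$. Hence $\{T_g\}_{g\in G}$ is a commuting family of weak-$*$ continuous affine self-maps of the non-empty weak-$*$ compact convex set $M$, and Markov--Kakutani produces a common fixed point $\Lambda \in M$, which is by construction a right-invariant mean.

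The main obstacle I would expect to confront is the locally-compact, non-discrete case of the semigroup half: in general $L^\infty(G)$ has to be defined with respect to some reference measure class, and one has to check that right translation genuinely yields a well-defined operator on $L^\infty(G)$ and that the topology chosen makes each $T_g$ weak-$*$ continuous. (For a discrete abelian semigroup everything is transparent and $\ell^\infty(G)$ needs no reference measure; but for the general statement it is worth being careful.) Aside from this technicality, both halves are standard functional-analytic arguments, and one simply invokes the appropriate tool (Haar measure in the compact case, Markov--Kakutani plus Banach--Alaoglu plus Hahn--Banach in the abelian case); the references \cite{edwards1994functional,steinerlipschitz} will handle the measurability and topology subtleties cleanly.
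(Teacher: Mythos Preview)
Your approach is correct and standard. Note, however, that the paper does not actually prove this theorem: it simply cites \cite[Section 3.5]{edwards1994functional} and \cite[Proposition 1.2]{steinerlipschitz}, and then remarks in one sentence that ``For a compact group, $\Lambda$ is just the average with respect to Haar measure. For general abelian semigroups it is more complicated.'' Your write-up matches the compact-group remark exactly and fills in the ``more complicated'' abelian case with the Markov--Kakutani argument, which is precisely the argument carried out in the cited references.
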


For a compact group, $\Lambda$ is just the average with respect to Haar measure. For general abelian semigroups it is more complicated.

\begin{cor}
The following semigroups are amenable:
\begin{enumerate}
\item $(\mathbb R^d,+)$
\item The orthogonal group $O(n)$
\item The semigroup $\mathcal K^d$ of convex bodies in $\mathbb R^d$, with Minkowski sum.
\end{enumerate}

\end{cor}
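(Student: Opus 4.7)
The plan is to verify, for each of the three structures, that it falls directly under one of the two cases (compact group or abelian semigroup) of the preceding theorem. Since each structure is either already an abelian (semi)group or a compact group, essentially no real work is required beyond routine checking of the hypotheses; the corollary is intended as a one-paragraph consequence rather than a substantive new result.

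First I would handle $(\mathbb{R}^d,+)$: this is literally an abelian group under vector addition, so it is in particular an abelian semigroup, and it is locally compact Hausdorff in its usual Euclidean topology. The abelian-semigroup clause of the cited theorem then gives amenability immediately. Next I would handle $O(n)$: it is the preimage $\{M : M^\top M = I\}$ of a closed set under a continuous map, hence closed, and it sits inside the matrix ball of Frobenius-radius $\sqrt{n}$, hence bounded; therefore it is compact. Matrix multiplication and inversion are continuous, so $O(n)$ is a compact topological group, and the compact-group clause of the theorem applies.

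Finally, for $(\mathcal{K}^d,+)$ with Minkowski sum, commutativity and associativity of the operation are inherited pointwise from commutativity and associativity of addition in $\mathbb{R}^d$, so this is an abelian semigroup (with $\{0\}$ serving as an identity if one wants a monoid). The only step that needs any thought is the topological setup: one must verify that $\mathcal{K}^d$ is locally compact Hausdorff in the Hausdorff metric. This is a standard consequence of Blaschke's selection theorem, which says that any sequence of convex bodies contained in a fixed bounded region has a Hausdorff-convergent subsequence; hence every Hausdorff-closed ball of bounded radius in $\mathcal{K}^d$ is compact, and local compactness follows. With this in hand, the abelian-semigroup clause of the theorem again applies.

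There is no real obstacle here: the only point meriting more than a sentence is the local-compactness verification for $\mathcal{K}^d$, which is handled by Blaschke selection as above. All remaining verifications (commutativity of Minkowski sum, compactness of $O(n)$, etc.) are standard and can be written out in a few lines.
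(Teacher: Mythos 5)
Your proposal is correct and matches the paper's intent: the paper states the corollary with no written proof precisely because it follows immediately from the preceding theorem, with cases 1 and 3 handled by the abelian-semigroup clause and case 2 by the compact-group clause, exactly as you lay out. The extra care you take over the topological hypotheses (compactness of $O(n)$, local compactness of $\mathcal{K}^d$ via Blaschke selection) is sound and fills in details the paper leaves implicit.
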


To denote the translation-invariant averaging operator $\Lambda$ we use integral notation following \cite{steinerlipschitz}. So for instance
\[\int_{K\in\mathcal K}f(K) dK:=\Lambda(f)\]
is an average of $f$ with respect to the invariant mean $\Lambda$ on $\mathcal K$.

% The point of this definition is that $\Lambda$ defines a translation-invariant averaging operator. We note that the existence of left and right invariant means is equivalent, although the left and right invariant Haar measures may differ. It is clear that any compact group is amenable, since it possesses an invariant Haar measure and we can take $\Lambda$ to be a normalized integral with respect to this Haar measure. The point is to extend this averaging to non-compact groups or semigroups. The following alternate definition maybe be helpful for understanding:

% \begin{prop}

% The group $G$ is amenable if and only if for every $\varepsilon>0$ and every compact $K\subseteq G$, there is a subset $U\subseteq G$ with 

% \[\frac{\mu(FU\Delta U)}{\mu(U)}<\varepsilon.\]

% \end{prop}

% For example, $\mathbb Z^d$ and $\mathbb R^d$ are amenable by taking $U$ to be a large ball or cube. Actually, 

% We remark that concretely, to contruct an invariant mean on $\mathbb Z^d$ or $\mathbb R^d$ directly, one uses the Axiom of Choice to pick a non-principle ultrafilter on $\mathbb N$ in order to define a Banach limit. Then, one takes the sequence of averages of the given $L^{\infty}$ function $f$ over large almost-invariant sets as above, and takes the Banach limit of this sequence to obtain the value $\Lambda(f)$. Because the sets are almost-invariant, translation of $f$ by any fixed amount has a negligible affect on the averages over the finite sets, hence a $0$ affect on the Banach limit.

\subsubsection{Proof of Theorem~\ref{thm:steineroptimal} }

We use invariant means to symmetrize any function into the Steiner point. We first give the axiomatic characterization due to Rolf Schneider of the Steiner point, which we will use after symmetrizing to show that we ended up with the Steiner point.

\begin{lem}[{\cite{schneider1971steiner}}]
\label{lem:steinerunique}

Let $s:\mathcal K\to \mathbb R^d$ be a function from convex sets to $\mathbb R^d$ such that:
\begin{enumerate}
\item $f(K_1)+f(K_2)=f(K_1+K_2)$
\item $f(gK)=gf(K)$ for any isometry $g:\mathbb R^d\to\mathbb R^d$
\item $f$ is uniformly continuous with respect to the Hausdorff metric.
\end{enumerate}
Then $f(K)=s(K)$ is the Steiner point.
\end{lem}

Now we prove that the Steiner point has the lowest movement among all selectors.

\steineroptimal*

\begin{proof}

Suppose $f:\mathcal K\to\mathbb R^d$ is an arbitrary selector which achieves the above movement estimate with constant $C(d,T)$. We first claim that $f$ has to be $2C(d,T)$-Lipschitz. Indeed, suppose we are given $K,K'$. We show that \[||f(K)-f(K')||_2\leq 2C(d,T) d_{\mathcal H}(K,K').\] Indeed let $K''$ be the convex hull of $K\cup K'$, so that \[h_{K''}(\theta)=\max(h_K(\theta),h_{K'}(\theta)).\] 
Since Hausdorff distance $d_{\mathcal H}$ is the same as $L^{\infty}$ norm of the support function, we know that $d_{\mathcal H}(K,K'')\leq d_{\mathcal H}(K,K')$ and $d_{\mathcal H}(K',K'')\leq d_{\mathcal H}(K,K')$. Since $K''$ contains both $K,K'$, the assumed movement estimate for $f$ tells us that $f(K'')$ is within distance $C(d,T)d_{\mathcal H}(K,K')$ from both $f(K)$ and $f(K')$. Now the claim follows from the triangle inequality.

Now we introduce symmetry to $f$ to obtain the Steiner point. For ease of understanding, we go in three steps. 

First we make $f$ translation invariant:
\[f^{(1)}(A)=\int_{x\in\R^d} f(A+x)-x dx.\]
This is well-defined because $f(A+x)\in A+x$ so the integrand $\|f(A+x)-x\|_2\leq \max_{a\in A}\|a\|_2$ is bounded depending only on $A$.

$f^{(1)}$ is translation invariant because
\begin{align*}
f^{(1)}(A+y) & =\int_{x\in\R^d} f(A+y+x)-x dx \\
& =y+\int_{x\in\R^d} f(A+(x+y))-(x+y) dx \\ 
& =y+f^{(1)}(A).
\end{align*}
Second, we introduce additivity:
\[f^{(2)}(A)=\int_{K\in\mathcal K} f^{(1)}(A+K)-f^{(1)}(K)dK.\]
We again note that $d(A+K,K)\leq \max_{a\in A}\|a\|_2$, so the integrand is an $L^{\infty}$ function on $\mathcal K$. As above, we now have additivity:
\begin{align*}
f^{(2)}(A+K') & =\int_{K\in\mathcal K} f^{(1)}(A+K'+K)-f^{(1)}(K) dK \\
& =\int_{K\in\mathcal K} f^{(1)}(A+(K'+K))-f^{(1)}(K'+K)+f^{(1)}(K'+K)-f^{(1)}(K) dK \\
& =f^{(2)}(A)+f^{(2)}(K')
\end{align*}
Furthermore we preserve translation invariance; in fact it is a special case of additivity where we take $K'$ to be a single point. Finally, we introduce $O(n)$ invariance, this time using an actual integral:
\[f^{(3)}(A)=\int_{g\in O(n)} g^{-1}f^{(2)}(g(A))dg.\]

This preserves additivity because the $O(n)$-action consists of linear maps, and similarly to above it results in $f^{(3)}$ additionally being $O(n)$-invariant.
Since this $\tilde f=f^{(3)}$ is isometry invariant and additive under Minkowski sum, and is also still Lipschitz since $f$ was, we conclude from Lemma~\ref{lem:steinerunique} that $\tilde f(K)=s(K)$.

Finally, we want to argue that by doing these symmetrizations, we preserve the movement upper bound $C(d,T)$. The point is that if we evaluate the movement of $\tilde f$ on a sequence of nested bodies \[K_T+B_1\supseteq K_1\supseteq \dots \supseteq K_T\] the movement is expressible as an ``average" of the movement of $f$ over a sequence of bodies which are isometries of $K_t+K'$. For any $K'$, we still have 
\[K_T+K'+B_1\supseteq K_1+K'\supseteq\dots\supseteq K_T+K'.\] Hence the sequence $(\tilde f(K_t))_{t\leq T}$ is an average of isometries applied to sequences $(f(K_t+K'))_{t\leq T}$, which can only shrink the distances. So we get that the Steiner point $\tilde f$ also has at most $C(d,T)$ movement since $f$ does.

\end{proof}

We remark that in the proof above, we did not know that $\tilde f$ was a selector until we deduced from Lemma~\ref{lem:steinerunique} that $\tilde f$ is exactly the Steiner point. Indeed, the selector property is not obviously preserved through the symmetrizations we did. We needed $f$ to be a selector in the proof so that e.g. $f(A+x)-x$ would be bounded only depending on $A$.

\section{Analysis of $\mathtt{ChasingNormedSpace}$}

\subsection{Notations and basic facts about log-concave distribution}
For any convex set $K$ and any convex function $\phi$, we use $x\sim(\phi,K)$
to denote that $x$ is sampled from the distribution $e^{-\phi(x)}1_{x\in K}/\vol_{\phi}(K)$. In particular,
we have 
\[
\E_{x\sim(\phi,K)}f(x)=\frac{\int_{K}f(x)e^{-\phi(x)}dx}{\int_{K}e^{-\phi(x)}dx}\quad\text{and}\quad\Var_{x\sim(\phi,K)}f(x)=\E_{x\sim(\phi,K)}\left(f(x)-\E_{y\sim(\phi,K)}f(y)\right)^{2}.
\]

Now, we list some facts that we will use in this section. This lemma shows that 
any log-concave distribution concentrates around any subset of constant measure.
\begin{lem}[{Borell's Lemma \cite[Lemma 2.4.5]{brazitikos2014geometry}}]
\label{lem:borell}For any log-concave distribution $p$ on $\R^{d}$, any symmetric convex set $A$ with $p(A)=\alpha\in(0,1)$, and
any $t>1$, we have
\[
1-p(tA)\leq\alpha\left(\frac{1-\alpha}{\alpha}\right)^{\frac{t+1}{2}}.
\]
\end{lem}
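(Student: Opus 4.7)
My plan is to combine two standard ingredients: a Brunn--Minkowski-type inequality for log-concave measures, and a geometric set inclusion that exploits the central symmetry of $A$. For the first ingredient, I would use the fact that any log-concave measure $p$ on $\R^d$ satisfies
\[
p\!\left(\lambda E + (1-\lambda) F\right) \geq p(E)^{\lambda}\, p(F)^{1-\lambda}
\]
for all Borel sets $E, F$ and $\lambda \in [0,1]$. This follows from Prekopa--Leindler applied to the log-concave density of $p$ (using the convexity of $-\log p$ applied pointwise against the indicators of $E$ and $F$), and importantly it is valid for arbitrary Borel $E, F$, not merely convex ones.

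The main geometric step is to establish the inclusion
\[
A^c \supseteq \tfrac{2}{t+1}(tA)^c + \tfrac{t-1}{t+1}\, A.
\]
I would verify this by contradiction: suppose $u = \tfrac{2}{t+1} x + \tfrac{t-1}{t+1} y$ lies in $A$ for some $x \in (tA)^c$ and $y \in A$. Solving for $x$ gives
\[
x \;=\; \tfrac{t+1}{2}\,u + \tfrac{t-1}{2}(-y) \;=\; t\!\left(\tfrac{t+1}{2t}\,u + \tfrac{t-1}{2t}\,(-y)\right).
\]
By central symmetry of $A$ we have $-y \in A$, and the coefficients $\tfrac{t+1}{2t}, \tfrac{t-1}{2t}$ are nonnegative and sum to $1$, so the bracketed expression is a convex combination of two points of $A$ and hence lies in $A$. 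This forces $x \in tA$, contradicting $x \in (tA)^c$. This is the one step where the symmetry of $A$ is used essentially, and I expect it to be the main (though modest) obstacle---everything else is algebra.

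Given the inclusion, I would apply the log-concavity inequality with $E = (tA)^c$, $F = A$, and $\lambda = \tfrac{2}{t+1}$ to obtain
\[
1 - \alpha \;=\; p(A^c) \;\geq\; p\!\left((tA)^c\right)^{2/(t+1)}\, \alpha^{(t-1)/(t+1)}.
\]
Raising both sides to the power $(t+1)/2$ and rearranging yields
\[
1 - p(tA) \;=\; p((tA)^c) \;\leq\; \frac{(1-\alpha)^{(t+1)/2}}{\alpha^{(t-1)/2}} \;=\; \alpha\!\left(\frac{1-\alpha}{\alpha}\right)^{(t+1)/2},
\]
which is exactly the claimed bound.
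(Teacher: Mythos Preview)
Your argument is correct and is in fact the standard proof of Borell's Lemma. Note, however, that the paper does not give its own proof of this statement: it is quoted verbatim as \cite[Lemma 2.4.5]{brazitikos2014geometry} and used as a black box. So there is no proof in the paper to compare against; what you have written is essentially the proof one finds in that reference (Pr\'ekopa--Leindler/log-concavity inequality applied to the inclusion $A^c \supseteq \tfrac{2}{t+1}(tA)^c + \tfrac{t-1}{t+1}A$, which is where the symmetry of $A$ enters).
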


In particular, it concentrates around intervals centered at the center of gravity.
\begin{lem}[Tail bound of log-concave distribution]
\label{lem:tail_logconcave}For any log-concave distribution $p$
on $\R^{d}$ and any unit vector
$v$, we have that
\[
\mathbb{P}_{y\sim p}\left(\left|v^{\top}(y-x)\right|\geq t\sigma\right)=e^{-\Omega(t)}
\]
where $x=\E_{y\sim p}y$ and $\sigma=\Var_{y\sim p}v^{\top}y$.
\end{lem}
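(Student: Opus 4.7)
The plan is to reduce to a one-dimensional tail estimate for the marginal of a log-concave distribution via Borell's lemma (Lemma~\ref{lem:borell}), which has already been stated. Throughout I will treat $\sigma$ as the standard deviation (i.e. $\sigma^2 = \Var_{y\sim p} v^\top y$), which is what makes the exponent $e^{-\Omega(t)}$ dimensionally sensible.

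The first step is a translation. Log-concavity is preserved under translation, so defining $p'$ to be the pushforward of $p$ by $y \mapsto y - x$ yields a log-concave distribution with $\E_{y \sim p'} v^\top y = 0$ and $\Var_{y \sim p'} v^\top y = \sigma^2$. It suffices to prove $\mathbb{P}_{y \sim p'}(|v^\top y| \geq t\sigma) = e^{-\Omega(t)}$.

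The second step is to set up a symmetric convex ``base'' set and apply Chebyshev. Consider
\[
A := \{y \in \R^d : |v^\top y| \leq 2\sigma\}.
\]
This is symmetric and convex. Chebyshev's inequality applied to the one-dimensional random variable $v^\top y$ (mean $0$, variance $\sigma^2$) gives $\mathbb{P}_{y \sim p'}(y \notin A) \leq 1/4$, so $\alpha := p'(A) \geq 3/4$.

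The third step is to dilate via Borell. For $t \geq 1$ we have $tA = \{y : |v^\top y| \leq 2t\sigma\}$. Applying Lemma~\ref{lem:borell} to the log-concave $p'$ and the symmetric convex $A$,
\[
1 - p'(tA) \leq \alpha \left(\frac{1-\alpha}{\alpha}\right)^{(t+1)/2} \leq \left(\tfrac{1}{3}\right)^{(t+1)/2}.
\]
Reparameterizing by $s = 2t$ (and handling $s \in [0, 2]$ trivially since the probability is at most $1$), we obtain
\[
\mathbb{P}_{y \sim p'}\!\left(|v^\top y| \geq s\sigma\right) \leq \left(\tfrac{1}{3}\right)^{(s/2 + 1)/2} = e^{-\Omega(s)},
\]
which is exactly the claim after undoing the translation.

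I do not anticipate a real obstacle: the only thing that has to be checked with care is that $A$ is \emph{symmetric} (required by Borell) and that the Chebyshev step actually gives $p'(A) \geq 3/4$, so that the base of the exponent $(1-\alpha)/\alpha$ is strictly less than $1$ and the bound becomes genuinely exponential in $t$. Using the constant $2$ in the definition of $A$ is what buys this gap; any constant strictly greater than $1$ would also work up to changing the implicit constant in $\Omega(t)$.
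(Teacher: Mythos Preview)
Your proof is correct, and it is precisely the derivation the paper intends: the lemma is stated without proof, but the paper places it immediately after Borell's lemma with the remark ``In particular, it concentrates around intervals centered at the center of gravity,'' signaling exactly the Chebyshev-then-Borell argument you carry out. Your observation that $\sigma$ should be read as the standard deviation (so $\sigma^2=\Var_{y\sim p}v^\top y$) is also correct and worth noting.
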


The localization lemma shows that one can prove a statement 
about high dimensional log-concave distributions via a $1$ dimensional statement.
\begin{thm}[{Localization Lemma \cite{kannan1995isoperimetric, fradelizi2004extreme}}]
Suppose $p$ is a log-concave distribution in $\R^{d}$, $g$ and
$h$ are continuous function such that 
\[
\int_{\R^{d}}g(x)p(x)dx>0\text{ and }\int_{\R^{d}}h(x)p(x)dx=0.
\]
Then, there is an interval $[a,b]\subset\R^{d}$ and scalars $\alpha,\beta$
such that
\[
\int_{[a,b]}g(x)\cdot e^{\alpha x+\beta}dx>0\text{ and }\int_{[a,b]}h(x)\cdot e^{\alpha x+\beta}dx=0.
\]
\end{thm}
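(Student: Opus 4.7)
The plan is to prove the Localization Lemma by the classical needle-decomposition / bisection strategy of Lov\'asz--Simonovits, combined with the extreme-point characterization of Fradelizi--Gu\'edon to produce the exponential density on the limiting needle. Throughout I would set $\mu = p\,dx$ and keep track of the pair of constraints $\int h\,d\mu = 0$ and $\int g\,d\mu > 0$ as they are propagated to smaller and smaller convex supports.

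The first step is the \emph{bisection lemma}: given a log-concave measure $\mu$ restricted to a convex body $K$ with $\int_K h\,d\mu = 0$ and $\int_K g\,d\mu > 0$, and any unit direction $u$, I would look for a hyperplane $H_u = \{u \cdot x = t^*\}$ such that, setting $K^{\pm} = K \cap \{\pm(u \cdot x - t^*) \geq 0\}$, both $\int_{K^{\pm}} h\,d\mu = 0$. This is obtained by applying the intermediate value theorem to the continuous function $F(t) = \int_{K \cap \{u \cdot x \leq t\}} h\,d\mu$, which vanishes at $\pm\infty$; the only subtlety is degenerate cases in which $F$ has no interior zero, which can be handled by a ham-sandwich-type argument bisecting simultaneously in several directions. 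Since $\int_K g\,d\mu > 0$, one of the two halves $K^{\pm}$ inherits a strictly positive $g$-integral; I replace $K$ by that half, and the restriction of $\mu$ remains log-concave.

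The second step is to iterate this bisection, choosing directions so as to shrink the width of $K$ in $d-1$ transverse directions (e.g.\ by a round-robin through an orthonormal basis, or by always bisecting along the direction of largest current width). Each bisection at least halves the width in the chosen direction, so after sufficiently many rounds the body lies in an arbitrarily thin slab around a $1$-dimensional line. By normalizing each restricted measure to a probability measure and invoking compactness of log-concave probabilities with uniformly bounded support (weak-$*$ convergence along a subsequence, together with continuity of $g$ and $h$), I extract a limit $\nu$ that is a log-concave probability measure supported on a $1$-dimensional interval $[a,b] \subset \mathbb{R}^d$, still satisfying $\int h\,d\nu = 0$ and $\int g\,d\nu > 0$.

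For the third step, I would upgrade the log-concave density on the needle $[a,b]$ to an exponential density $e^{\alpha t + \beta}$ via Fradelizi--Gu\'edon extremality. Consider the convex set $\mathcal{M}$ of log-concave probability measures on $[a,b]$ satisfying the single linear constraint $\int h\,d\nu = 0$, and the continuous linear functional $\nu \mapsto \int g\,d\nu$. By Bauer's maximum principle (or by a Choquet decomposition), the supremum of $\int g\,d\nu$ on $\mathcal{M}$ is attained at an extreme point; extreme points of $\mathcal{M}$ are precisely those whose density has the form $e^{\alpha t + \beta}$ on a subinterval of $[a,b]$ (or a Dirac mass, which is a degenerate case handled separately), since any log-concave density which is not a pure exponential can be written nontrivially as a convex combination of two log-concave densities with the same total mass and the same $h$-integral. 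Any such extreme point inherits positive $g$-integral, producing the desired $(a',b',\alpha,\beta)$.

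I expect the main obstacle to be the limit argument of the second step: ensuring that the bisection sequence of convex bodies actually collapses onto a $1$-dimensional interval of positive length (rather than shrinking to a point) while maintaining both integral constraints after renormalization. This requires a quantitative control that, at each bisection, only $d-1$ directions are being squeezed, and that the limiting measure is not concentrated at a point -- which in turn uses the assumption $\int g\,d\mu > 0$ and the continuity of $g,h$ to prevent degeneracy. A secondary obstacle is the clean extreme-point characterization in step three, where one must verify that being non-exponential prevents extremality by exhibiting an explicit convex decomposition respecting log-concavity and the linear constraint.
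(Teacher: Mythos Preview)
The paper does not give a proof of this statement; it is quoted as a known tool, with citation to \cite{kannan1995isoperimetric, fradelizi2004extreme}, and used as a black box in the proof of Lemma~\ref{lem:movement_cg}. There is therefore nothing in the paper to compare your proposal against.

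Your outline is essentially the standard proof from those cited sources: bisection down to a one-dimensional needle (Lov\'asz--Simonovits / Kannan--Lov\'asz--Simonovits) followed by an extreme-point reduction on the needle (Fradelizi--Gu\'edon). One point to watch in your step~3: the set $\mathcal{M}$ of log-concave probability measures on $[a,b]$ with $\int h\,d\nu=0$ is \emph{not} convex---a convex combination of two log-concave densities is generally not log-concave---so Bauer's maximum principle and ordinary Choquet theory do not apply as written. Fradelizi and Gu\'edon work instead with a direct perturbation argument (equivalently, an adapted notion of extremality for this non-convex class) to show that any non-exponential log-concave density on a segment can be moved within the class while preserving a single affine constraint. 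The conclusion you state is correct, but the justification must go through their machinery rather than standard convex analysis; you were right to flag this step as an obstacle.
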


This lemma relates the deviation and width of a convex set.
\begin{lem}[{\cite[Theorem 4.1]{kannan1995isoperimetric}}]
\label{lem:deviation_width}For any convex set $K\subset\R^{d}$
and any unit vector $v$, we have
\[
\mathrm{width}_{K}(v)^{2}\leq4d\cdot(d+2)\cdot\Var_{x\sim K}v^{\top}x.
\]
where $\mathrm{width}_{K}(v) = \max_{x \in K} v^\top x - \min_{x \in K} v^\top x$.
\end{lem}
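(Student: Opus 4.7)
The plan is to reduce to a one-dimensional extremal problem via Brunn's theorem, identify the minimizing density, and compute the resulting constant directly. Let $X$ be uniformly distributed on $K$ and set $Y = v^\top X$, so that $\Var_{x \sim K} v^\top x = \Var(Y)$. The marginal density of $Y$ is
\[
  f(t) \;=\; \frac{\vol_{d-1}(K \cap \{v^\top x = t\})}{\vol(K)},
\]
supported on an interval $[a,b]$ of length $W := \mathrm{width}_K(v)$. By Brunn's theorem, a direct consequence of the Brunn--Minkowski inequality applied to parallel hyperplane sections of a convex body, $f^{1/(d-1)}$ is concave on $[a,b]$. The lemma thus reduces to a purely one-dimensional statement: any probability density $f$ on an interval of length $W$ with $f^{1/(d-1)}$ concave has $\Var(Y) \geq W^2/(4d(d+2))$.

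Next I would minimize $\Var(Y)$ over all such admissible $f$. Heuristically, to shrink the variance one pushes mass toward a single point; the $1/(d-1)$-concavity constraint is what prevents simultaneous concentration at both endpoints, and so it should force the worst case to be one-sided. After translating to $[0,W]$, I claim the minimum is attained at $f(t) \propto t^{d-1}$, i.e.\ when $f^{1/(d-1)}$ is the linear function vanishing at the left endpoint. For that $f$, a direct computation of the first two moments yields
\[
  \E[Y] = \frac{dW}{d+1}, \qquad \E[Y^2] = \frac{dW^2}{d+2}, \qquad \Var(Y) = \frac{dW^2}{(d+1)^2(d+2)}.
\]
Therefore $W^2 \leq \tfrac{(d+1)^2(d+2)}{d}\Var(Y)$, and since $(d+1)^2 \leq 4d^2$ for $d \geq 1$, this is bounded above by $4d(d+2)\Var(Y)$, matching the claim.

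The main obstacle is justifying that $f(t) \propto t^{d-1}$ really is the extremizer, since the class of $1/(d-1)$-concave densities is more delicate than either log-concave or uniform. One option is a variational argument: write $f = g^{d-1}/\|g^{d-1}\|_1$ with $g$ nonnegative and concave on $[0,W]$, and show that ``straightening'' $g$ near an endpoint while preserving its $L^{d-1}$ mass does not increase $\Var(Y)$, so optimum is reached at an affine $g$ vanishing at an endpoint. A cleaner route, and the one I would favor, is to invoke a Localization-style reduction in the spirit of the Localization Lemma stated above, adapted to the $1/(d-1)$-concave setting: inequalities about uniform distributions on $d$-dimensional convex bodies reduce to inequalities over one-dimensional ``needles'' in which the density is a nonnegative polynomial of degree at most $d-1$ times an exponential, and among these the one-sided monomial $t^{d-1}$ is extremal. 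This extremality identification is the crux of the proof and is carried out essentially in \cite{kannan1995isoperimetric}, from which the constant $4d(d+2)$ follows.
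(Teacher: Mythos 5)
The paper itself does not prove this lemma; it cites \cite[Theorem 4.1]{kannan1995isoperimetric} directly, so there is no in-paper argument to compare against. Your reduction via Brunn's theorem to a one-dimensional problem over $(1/(d-1))$-concave densities is the right first move, and your moment computations for $t^{d-1}$ are correct. However, your identification of the extremizer is wrong, and so is the heuristic behind it. You argue that the concavity constraint forbids concentrating mass at both endpoints simultaneously and therefore forces the worst case to be ``one-sided,'' i.e.\ $f(t)\propto t^{d-1}$. But the constraint does nothing to prevent concentration in the \emph{middle}. Take $d=2$ on $[0,1]$: the one-sided triangle $f(t)=2t$ has $\Var = 1/2 - 4/9 = 1/18$, while the symmetric tent $f(t)=4\min(t,1-t)$ is also concave (it is the marginal of a diamond) and has $\Var = 1/24 < 1/18$. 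More generally, the symmetric double cone $f(t)\propto \min(t,W-t)^{d-1}$ is admissible (marginal of a bipyramid) and a direct computation gives $\Var = \tfrac{W^2}{2(d+1)(d+2)}$, which is strictly smaller than your $\tfrac{dW^2}{(d+1)^2(d+2)}$ for every $d\geq 2$. So $t^{d-1}$ cannot be the minimizer, and the inequality you derive from it is not a valid lower bound on $\Var$.

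It happens that $\tfrac{W^2}{2(d+1)(d+2)} \geq \tfrac{W^2}{4d(d+2)}$ (equivalent to $d\geq 1$), so the double-cone candidate still satisfies the lemma and the stated constant $4d(d+2)$ is safe; but your proof does not establish this, because your chain of inequalities rests on a false extremality claim. To repair the argument you would need to show that the symmetric double cone (or whichever density truly minimizes) is extremal among all $(1/(d-1))$-concave densities on an interval of fixed length---a localization/needle reduction à la KLS is the right tool, but the ``needle'' you land on must be the double cone, not the one-sided cone, and the monotone rearrangement you gesture at (``straightening near an endpoint'') pushes in the wrong direction.
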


Grunbaum's Theorem shows that for any log concave distribution, any
half space containing the centroid has at least $\frac{1}{e}$ fraction of the mass.
The following theorem shows that a similar statement holds as long
as the centroid is close to the half space. \cite[Theorem 3]{bertsimas2004solving}
proved this theorem for convex sets. For completeness, we include
the proof for log-concave distribution.
\begin{thm}[Grunbaum's Theorem \cite{grun}]
\label{thm:grunbaum_thm}Let $f$ be any log-concave distribution
on $\R^{d}$ and $H$ be any half space $H=\{x\in\R^{d}:v^{\top}x\geq b\}$
for some vector $v$. Let $c$ be the centroid of $f$. Then
\[
\int_{H}f(x)dx\geq\frac{1}{e}-t^{+}
\]
where $t=\frac{b-v^{\top}c}{\sqrt{\E_{x\sim f}(v^{\top}(x-c))^{2}}}$
is the distance of the centroid and the half space divided by the
deviation on the normal direction $v$.
\end{thm}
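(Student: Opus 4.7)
The plan is to reduce the theorem to a one-dimensional inequality and then combine the classical Grunbaum inequality with a sharp pointwise density bound.

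\emph{Reduction to one dimension.} First, I would apply Prékopa--Leindler to obtain that the one-dimensional marginal $g$ of $f$ along $v$ is a log-concave density on $\R$. One checks directly that $\int_H f\,dx = \int_b^\infty g(y)\,dy$, that the mean of $g$ equals $v^\top c$, and that its second central moment equals $\E[(v^\top(x-c))^2]$. After an affine rescaling of the real line so that $g$ has mean $0$ and variance $1$, the half-space parameter becomes $t$, and the theorem reduces to: for every log-concave $g$ on $\R$ with mean $0$ and variance $1$,
\[
\int_t^\infty g(y)\,dy \;\geq\; \frac{1}{e} - t^+ \qquad \text{for every } t \in \R.
\]

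\emph{Splitting into two cases.} When $t \leq 0$, the classical one-dimensional Grunbaum inequality (any half-line containing the centroid of a log-concave distribution carries mass at least $1/e$) gives $\int_t^\infty g \geq \int_0^\infty g \geq 1/e$, matching the bound since $t^+ = 0$. When $t > 0$, I would introduce the error function $v(s) := (1/e - s) - \int_s^\infty g$ on $[0, \infty)$. Grunbaum at $s = 0$ yields $v(0) \leq 0$, and $v'(s) = g(s) - 1$. So it would suffice to prove the pointwise bound $g(s) \leq 1$ for every $s \geq 0$, since then $v$ is non-increasing on $[0, \infty)$ and hence $v(t) \leq v(0) \leq 0$, which is the desired inequality.

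\emph{Pointwise density bound and main obstacle.} The bound $g(s) \leq 1$ for all $s \geq 0$ follows (in fact, for all $s$) from the sharp inequality $\sup_y g(y) \cdot \sqrt{\Var g} \leq 1$ for one-dimensional log-concave densities, with equality attained by the one-sided exponential $e^{-(y+1)}1_{[-1,\infty)}$, which has mean zero and variance one. This sharp density--variance inequality is the main nontrivial ingredient. I would prove it via the localization lemma stated earlier: reducing to the extremal family of log-linear needles $g(y) = ce^{\alpha y}\,1_{[p,q]}$, the three parameters $(\alpha, p, q)$ are matched by the three constraints $\int g = 1$, $\int y g = 0$, $\int y^2 g = 1$, leaving only a small explicit family to check; the shifted exponential emerges as the unique extremizer and the bound is verified by a direct computation. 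The hard part of the proof is therefore this one-dimensional extremality argument for the density bound, after which everything else is essentially bookkeeping.
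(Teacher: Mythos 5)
Your proof is correct and follows essentially the same route as the paper's: reduce to a one-dimensional isotropic marginal, handle $t\leq 0$ by the classical Grunbaum inequality, and handle $t>0$ by combining the $t=0$ case with the pointwise bound $g\leq 1$ for isotropic log-concave densities on $\R$ (your monotone error function $v$ is just an explicit way of writing the step the paper compresses into one sentence). The only divergence is that the paper simply cites the density bound from Lov\'asz--Vempala (Lemma 5.5a), whereas you propose to re-derive it via the localization lemma and the extremal shifted exponential; either is fine.
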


\begin{proof}
Since the marginal of a log-concave distribution is log-concave, by
taking marginals and rescaling, it suffices to prove that $\int_{x\geq t}f(x)dx\geq\frac{1}{e}-t^{+}$
for isotropic log-concave distribution on $\R$. The case for $t\leq0$
follows from the classical Grunbaum Theorem. The case for $t\geq0$
follows from the case $t=0$ and the fact that $f(x)\leq1$ for all
$x$ \cite[Lemma 5.5a]{lovasz2007geometry}.
\end{proof}
Finally, we will need this statement about strongly log-concave distribution, namely, a log-concave distribution multiplied by a Gaussian distribution.
\begin{thm}[Brascamp-Lieb inequality \cite{BL}]
\label{thm:Brascamp-Lieb}Let $\gamma$ be any Gaussian distribution on $\R^{d}$ and $f$
be any log-concave function on $\R^{d}$. Define the density function $h$ as follows:
\[
h(x)=\frac{f(x)\gamma(x)}{\int_{\R^{d}}f(y)\gamma(y)\,dy}.
\]
For any vector $v\in\R^{d}$ and any $\alpha\ge1$, $\E_{h}|v^{T}(x-\E_{h}x)|^{\alpha}\le\E_{\gamma}|v^{T}x|^{\alpha}$.
\end{thm}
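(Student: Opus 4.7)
Let $L = K + B_r$, $L' = (K\cap H) + B_r$, $c = \cg_\phi(L)$, $c' = \cg_\phi(L')$, and $\sigma := \sqrt{\Var_{x\sim(\phi,L)}(v^\top x)}$; by hypothesis $\sigma \ge 2er$. The plan is to split the proof into a geometric volume estimate and a probabilistic centroid estimate, tied together by a carefully chosen application of Brascamp-Lieb.

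For the volume bounds I would first establish the pair of simple set-theoretic inclusions
\[
L \cap \{v^\top y \geq v^\top c + r\} \;\subseteq\; L' \;\subseteq\; L \cap \{v^\top y \geq v^\top c - r\}.
\]
Both follow in one line by writing $y = x + b$ with $x \in K$ and $\|b\|_2 \leq r$ and using $|v^\top b| \leq r$. Applying Grunbaum's theorem (Theorem~\ref{thm:grunbaum_thm}) to the two displaced half-spaces yields a $\phi$-mass of at least $\tfrac{1}{e} - \tfrac{r}{\sigma}$ on each; the hypothesis $\sigma \geq 2er$ makes the correction $\tfrac{r}{\sigma} \leq \tfrac{1}{2e}$, so one reads off the constant-fraction lower bound on $\vol_\phi(L')$ from the left inclusion and the $(1-\tfrac{1}{2e})$ upper bound by complementing the right inclusion.

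For the centroid bound, note that since $L' \subseteq L$ the centroid $c'$ is the conditional mean of the $(\phi,L)$-distribution given the event $A = \{y \in L'\}$, with $\Pr(A) = \vol_\phi(L')/\vol_\phi(L)$ bounded away from $0$ and $1$ by the previous paragraph. The Cauchy-Schwarz identity $\mathrm{Cov}(u^\top x,\mathbf{1}_A)^2 \leq \Var(u^\top x)\,\Var(\mathbf{1}_A)$ then gives the classical conditional-mean bound
\[
\bigl|u^\top(c' - c)\bigr| \;=\; \bigl|\E[u^\top x \mid A] - \E[u^\top x]\bigr| \;\leq\; \sqrt{\tfrac{1-\Pr(A)}{\Pr(A)}}\;\sigma_u \;\lesssim\; \sigma_u,
\]
where $\sigma_u^2 := \Var_{(\phi,L)}(u^\top x)$. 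Since $\|w\| = \sup_{\|u\|_* \leq 1} u^\top w$ by duality, the theorem reduces to showing $\sigma_u \lesssim \|u\|_*/\sqrt\alpha$ for every $u$.

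The main obstacle is this last variance inequality: a naive application of Theorem~\ref{thm:Brascamp-Lieb} with the isotropic Gaussian $e^{-\frac{\alpha}{2}\|x\|_2^2}$ only yields $\sigma_u \lesssim \|u\|_2/\sqrt\alpha$, which loses a $\sqrt d$ factor after supping over $\|u\|_* \leq 1$. Instead I would apply Brascamp-Lieb with a direction-dependent Gaussian whose inverse covariance is the rank-perturbed matrix
\[
M_\epsilon \;:=\; (\alpha - \epsilon)\,\tfrac{uu^\top}{\|u\|_*^{2}} \;+\; \epsilon\, I, \qquad \epsilon \downarrow 0.
\]
The dual Cauchy-Schwarz inequality $(u^\top v)^2 \leq \|u\|_*^2\|v\|^2$ together with the algorithm's running assumption $\|v\|_2 \leq \|v\|$ gives $v^\top M_\epsilon v \leq \alpha\|v\|^2$ for every $v$, so $\phi - \tfrac{1}{2} x^\top M_\epsilon x$ is convex by the $\alpha$-strong convexity of $\phi$. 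This lets us factor $e^{-\phi(x)}\mathbf{1}_{x \in L}$ as a log-concave density times $\gamma_\epsilon = e^{-\frac{1}{2} x^\top M_\epsilon x}$, and the $p=2$ case of Brascamp-Lieb then gives $\sigma_u^2 \leq u^\top M_\epsilon^{-1} u$. Because $u$ is an eigenvector of $M_\epsilon$ with eigenvalue $(\alpha-\epsilon)\|u\|_2^2/\|u\|_*^2 + \epsilon$, a one-line computation shows $u^\top M_\epsilon^{-1} u \to \|u\|_*^2/\alpha$ as $\epsilon \downarrow 0$. Passing to the limit and combining with the conditional-mean bound yields $\|c' - c\| \lesssim \alpha^{-1/2}$, completing the proof.
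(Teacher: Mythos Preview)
Your proposal does not address the stated theorem. Theorem~\ref{thm:Brascamp-Lieb} is the Brascamp--Lieb inequality, which the paper quotes from the literature without proof; there is nothing in the paper to compare against, and you make no attempt to prove it either. What you have actually written is a proof of Theorem~\ref{thm:movement_cg_r}, with Brascamp--Lieb invoked as a black-box ingredient.

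Read as a proof of Theorem~\ref{thm:movement_cg_r}, your argument is correct and diverges from the paper's in one substantive place. For the volume bounds, both of you sandwich $L'=(K\cap H)+B_r$ between half-space slices of $L=K+B_r$ and apply the approximate Gr\"unbaum theorem; your $r$-shifted left inclusion $L\cap\{v^\top y\ge v^\top c+r\}\subseteq L'$ yields only $\vol_\phi(L')\ge\tfrac{1}{2e}\vol_\phi(L)$ rather than the stated $\tfrac{1}{e}$, but this is harmless downstream. For the centroid bound, both of you reduce via Cauchy--Schwarz and duality to the variance inequality $\Var_{(\phi,L)}(u^\top x)\lesssim\|u\|_*^2/\alpha$. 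The paper (Lemma~\ref{lem:movement_cg}) proves this by the localization lemma, collapsing to a one-dimensional strongly log-concave density and applying Brascamp--Lieb on $\R$. You instead apply Brascamp--Lieb directly in $\R^d$ against the anisotropic Gaussian with inverse covariance $M_\varepsilon=(\alpha-\varepsilon)\,uu^\top/\|u\|_*^2+\varepsilon I$; since $u$ is an eigenvector of $M_\varepsilon$, the bound $u^\top M_\varepsilon^{-1}u\to\|u\|_*^2/\alpha$ falls out immediately. This sidesteps localization and is arguably cleaner, but your verification that $v^\top M_\varepsilon v\le\alpha\|v\|^2$ relies on $\|v\|_2\le\|v\|$, which is a standing assumption of Algorithm~\ref{alg:generalnorm} but not a hypothesis of Theorem~\ref{thm:movement_cg_r} itself; the paper's localization route needs no such extra assumption.
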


\subsection{Cutting along $\protect\cg_{\phi}(\Omega)$}

The goal of this section is to prove Theorem \ref{thm:movement_cg_r}.
This theorem shows that a cutting plane passing through $\cg_{\phi}(\Omega+B_{r})$
behaves very similar to $\cg_{\phi}(\Omega)$ as long as we do not
cut along a narrow direction of $\Omega+B_{r}$. In particular, we
show that it decreases the volume $\vol_{\phi}(\Omega+B_{r})$ by a constant
factor and that $\cg_{\phi}(\Omega+B_{r})$ does not move a lot when
we cut $\Omega$. 

We will first prove $\cg_{\phi}(\Omega)$ is stable when a constant
fraction of $\Omega$ is removed.
\begin{lem}
\label{lem:movement_cg}Let $\phi$ be an $\alpha$-strongly convex
function on the normed space $\|\cdot\|$. Let $\widetilde{\Omega}\subseteq\Omega$
be two convex sets. Then, we have that
\[
\|\cg_{\phi}(\widetilde{\Omega})-\cg_{\phi}(\Omega)\|\lesssim\sqrt{\frac{\vol_{\phi}(\Omega)}{\vol_{\phi}(\widetilde{\Omega})}}\cdot\alpha^{-\frac{1}{2}}.
\]
\end{lem}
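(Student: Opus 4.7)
The plan is to bound $\|\Delta\|$ where $\Delta := \cg_{\phi}(\tilde{\Omega}) - \cg_{\phi}(\Omega)$ by dualizing to $\|\cdot\|_{*}$ and converting the problem into a variance estimate for the log-concave probability measure $\mu$ with density proportional to $e^{-\phi}\mathbf{1}_{\Omega}$.

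Setting $\rho := \vol_{\phi}(\tilde{\Omega})/\vol_{\phi}(\Omega)$ and writing $\mu = \rho\tilde{\mu} + (1-\rho)\mu'$ as a convex combination of the normalized restrictions to $\tilde{\Omega}$ and $\Omega\setminus\tilde{\Omega}$, for any $v$ with $\|v\|_{*}\le 1$ the variance decomposition together with Jensen's inequality gives
\[
\Var_{\mu}(v^{\top} x) \;\ge\; \rho\, \E_{\tilde{\mu}}\!\bigl[(v^{\top} x - \E_{\mu} v^{\top} x)^{2}\bigr] \;\ge\; \rho\,(v^{\top}\Delta)^{2},
\]
so $(v^{\top}\Delta)^{2}\le \rho^{-1}\Var_{\mu}(v^{\top} x)$. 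Taking the supremum over $\|v\|_{*}\le 1$, which computes $\|\Delta\|$, reduces the lemma to establishing the uniform variance bound $\Var_{\mu}(v^{\top} x)\le \alpha^{-1}\|v\|_{*}^{2}$; combined with the previous inequality this yields $\|\Delta\|^{2}\le 1/(\rho\alpha)$, which is the desired estimate.

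For the variance bound I would proceed by reduction to one dimension. Along any segment $[a,b]$, the restriction $\tilde\phi(t):=\phi(a+t(b-a))$ inherits $\alpha\|b-a\|^{2}$-strong convexity in the scalar parameter $t$, since
\[
\tilde\phi(t_{2})\;\ge\;\tilde\phi(t_{1})+\tilde\phi'(t_{1})(t_{2}-t_{1})+\tfrac{\alpha}{2}\|b-a\|^{2}(t_{2}-t_{1})^{2}
\]
is a direct consequence of strong convexity of $\phi$ in $\|\cdot\|$. The one-dimensional Brascamp--Lieb inequality then bounds the variance of $t$ under any strongly log-concave 1D density of the form $e^{-\tilde\phi(t)+\text{linear}}$ by $(\alpha\|b-a\|^{2})^{-1}$; using $(v^{\top}(b-a))^{2}\le \|v\|_{*}^{2}\|b-a\|^{2}$ this translates into $\Var(v^{\top} x)\le \|v\|_{*}^{2}/\alpha$ along each needle. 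Kannan--Lov\'asz--Simonovits localization (a refinement of the lemma stated in the appendix, which is needed because the tilted 1D densities produced by localization must be matched against the strongly log-concave original $\mu$) then lifts this to the $d$-dimensional variance estimate.

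The main obstacle is obtaining the variance bound with the dual-norm factor $\|v\|_{*}^{2}$ rather than the Euclidean factor $|v|_{2}^{2}$: applying the Brascamp--Lieb theorem in the appendix directly with the Gaussian $e^{-\alpha|x|_{2}^{2}/2}$ would only use Euclidean strong convexity of $\phi$ and would yield $\Var_{\mu}(v^{\top} x)\le \alpha^{-1}|v|_{2}^{2}$, which can be much larger than $\alpha^{-1}\|v\|_{*}^{2}$. The needle-based reduction is essential because strong convexity in the general norm $\|\cdot\|$ passes cleanly along each 1D segment with the natural constant $\alpha\|b-a\|^{2}$, at which point the 1D variance bound yields exactly the factor $\|v\|_{*}^{2}/\alpha$ that the lemma requires.
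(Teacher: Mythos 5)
Your argument is correct and follows essentially the same route as the paper: the same variance decomposition reducing the problem to the uniform bound $\Var_{\mu}(v^{\top}x)\lesssim\alpha^{-1}\|v\|_{*}^{2}$, the same localization to needles where $\alpha$-strong convexity in $\|\cdot\|$ becomes $\alpha\|b-a\|^{2}$-strong convexity along the segment, the same one-dimensional Brascamp--Lieb step, and the same dual-norm/H\"older inequality $(a-b)^{\top}h\le\|a-b\|\,\|h\|_{*}$ to land on the right constant. Your closing remark about why the $d$-dimensional Gaussian Brascamp--Lieb cannot be applied directly is exactly the reason the paper routes the argument through localization.
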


\begin{proof}
For any $h$, we have that
\begin{align*}
\Var_{x\sim(\phi,\Omega)}x^{\top}h & =\frac{\vol_{\phi}(\widetilde{\Omega})}{\vol_{\phi}(\Omega)}\frac{\int_{\Omega}((x-\cg_{\phi}(\Omega))^{\top}h)^{2}e^{-\phi(x)}dx}{\int_{\widetilde{\Omega}}e^{-\phi(x)}dx}\\
 & \geq\frac{\vol_{\phi}(\widetilde{\Omega})}{\vol_{\phi}(\Omega)}\E_{x\sim(\phi,\widetilde{\Omega})}((x-\cg_{\phi}(\Omega))^{\top}h)^{2}\\
 & \geq\frac{\vol_{\phi}(\widetilde{\Omega})}{\vol_{\phi}(\Omega)}(\E_{x\sim(\phi,\widetilde{\Omega})}(x-\cg_{\phi}(\Omega))^{\top}h)^{2}\\
 & =\frac{\vol_{\phi}(\widetilde{\Omega})}{\vol_{\phi}(\Omega)}((\cg_{\phi}(\widetilde{\Omega})-\cg_{\phi}(\Omega))^{\top}h)^{2}
\end{align*}
where the first inequality follows from $\widetilde{\Omega}\subset\Omega$,
the second inequality follows from Cauchy Schwarz. To bound the movement of $\cg$, this calculation shows that it suffices to upper bound the variance. 

We claim that
for any $\|h\|_{*}=1$, we have 
\begin{equation}
\Var_{x\sim(\phi,\Omega)}x^{\top}h\lesssim\alpha^{-1}.\label{eq:var_claim}
\end{equation}
First, assuming this claim is true, we have that $((\cg_{\phi}(\widetilde{\Omega})-\cg_{\phi}(\Omega))^{\top}h)^{2}\lesssim\frac{\vol_{\phi}(\Omega)}{\vol_{\phi}(\widetilde{\Omega})}\cdot\alpha^{-1}$.
Taking $h$ so that $(\cg_{\phi}(\widetilde{\Omega})-\cg_{\phi}(\Omega))^{\top}h=\|(\cg_{\phi}(\widetilde{\Omega})-\cg_{\phi}(\Omega)\|$,
we have that the result that $\|\cg(\widetilde{\Omega})-\cg(\Omega)\|^{2}\lesssim\frac{\vol_{\phi}(\Omega)}{\vol_{\phi}(\widetilde{\Omega})}\cdot\alpha^{-1}.$ So to prove the lemma we just need to prove this claim.

Now, we prove the claim using the localization lemma.
Consider the integral problem
\[
\max_{\text{log-concave distribution }p}\int(x^{\top}h)^{2}\cdot e^{-\phi(x)}p(x)dx\text{ subject to }\int x^{\top}h\cdot e^{-\phi(x)}p(x)dx=t.
\]
To prove our claim, it suffices to upper bound the maximum of the
integral problem by $t^{2}+O(\alpha^{-1})$ for all $t$. The localization
lemma shows that it suffices to consider the case that $p$ is log-affine. This
is same as proving 
\[
\Var_{x\sim q}(x^{\top}h)^{2}\lesssim\alpha^{-1}
\]
for any distribution $q$ of the form $q(x)\propto e^{\alpha^{\top}x-\phi(x)}$
restricted to an interval $[a,b]\subset\R^d$. We can parameterize
the distribution $q(s)$ by $s=x^{\top}h$. (Similarly for $\phi$).
Using the assumption that $\phi$ is $\alpha$-strongly convex in
$\|\cdot\|$, we have that 
\[
\phi(s)\geq\phi(t)+\left\langle \nabla\phi(t),s-t\right\rangle +\frac{\alpha}{2}\left\Vert \frac{a-b}{(a-b)^{\top}h}\right\Vert ^{2}(s-t)^{2}.
\]
Hence, $\phi(s)$ is $\frac{\alpha\|a-b\|^{2}}{((a-b)^{\top}h)^{2}}$-strongly
convex in $\R$. The Brascamp-Lieb inequality shows that 
\[
\Var_{x\sim q}(x^{\top}h)^{2}\lesssim\frac{((a-b)^{\top}h)^{2}}{\alpha\|a-b\|^{2}}.
\]
Since $\|h\|_{*}=1$, we have that $(a-b)^{\top}h\leq\|a-b\|$. Therefore,
we proved the claim (\ref{eq:var_claim}).
\end{proof}
Now, we are ready to prove the main theorem of this section:

\movementcgr*

\begin{proof}
Let $\Omega=K+B_{r}$ and that $\widetilde{\Omega}=(K\cap H)+B_{r}$.
Define the half space $\widetilde{H}=H+B_{r}$, namely, $\widetilde{H}=\{x:v^{\top}x\geq v^{\top}\cg_{\phi}(\Omega)-r\}$.
Using that $v$ is an unit vector, we have that
\[
\Omega\cap H\subseteq\widetilde{\Omega}\subseteq\Omega\cap\widetilde{H}.
\]

Now, we apply Grunbaum's Theorem and the inclusion above to prove that
$\vol_{\phi}(\widetilde{\Omega})\sim\vol_{\phi}(\Omega)$.

\textbf{Upper bound:} By the generalized Grunbaum Theorem (Theorem
\ref{thm:grunbaum_thm}), since the distance of $\widetilde{H}^{c}$
and $\cg_{\phi}(\Omega)$ is less than $\frac{1}{2e}$ times the deviation,
we have that $\vol_{\phi}(\Omega\cap\widetilde{H}^{c})\geq\frac{1}{2e}\vol_{\phi}(\Omega)$.
Hence, we have $\vol_{\phi}(\widetilde{\Omega})\leq\vol_{\phi}(\Omega\cap\widetilde{H})\leq(1-\frac{1}{2e})\cdot\vol_{\phi}(\Omega)$.

\textbf{Lower bound:} By the standard Grunbaum Theorem, since $H$
is a half space through the $\cg_{\phi}(\Omega)$, we have that $\vol_{\phi}(\Omega\cap H)\geq\frac{1}{e}\vol_{\phi}(\Omega)$.
Hence, we have $\vol_{\phi}(\widetilde{\Omega})\geq\frac{1}{e}\vol_{\phi}(\Omega)$. 

The movement of $\cg_{\phi}$ follows from this volume lower bound
and Lemma \ref{lem:movement_cg}.
\end{proof}

\subsection{Main Result}

Now, we are ready to upper bound the movement per step.
\begin{lem}
\label{lem:movement}Each step $x$ moves by at most $O(C\cdot dD\cdot r+\alpha^{-\frac{1}{2}})$.
\end{lem}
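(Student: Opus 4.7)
The movement that $x$ undergoes in one pass through the outer loop splits into two disjoint pieces: (a) the total motion produced by the $C$-competitive $\ell^{2}$ nested chasing sub-algorithm running on the narrow strip $\Omega\cap(x+V)$ when $V\neq\{0\}$, and (b) the single jump from the old centroid $\cg_{\phi}(\Omega+B_{r})$ to the new centroid $\cg_{\phi}((\Omega\cap H)+B_{r})$ after the cut by $H$. Any $\leq 2r$ correction used to snap $x$ back into $\Omega$ is $O(r)$, which is dominated by the other bounds since $r\leq 1/\sqrt{d}$.

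For piece (b) I would apply Theorem~\ref{thm:movement_cg_r} with $K=\Omega$ and half-space $H$. The only nontrivial hypothesis to check is $\Var_{x\sim(\phi,\Omega+B_{r})}(v^{\top}x)\geq(2er)^{2}$ for the normal $v$ of $H$. This is automatic from the construction: $H$ is defined so that $x+V\subset\partial H$, hence $v\in V^{\perp}$; by the definition of $V$, every direction in $V^{\perp}$ is spanned by eigenvectors of the covariance matrix $A$ with eigenvalue $\geq(2er)^{2}$, so the same holds for $v$. Theorem~\ref{thm:movement_cg_r} then gives the $O(\alpha^{-1/2})$ bound on the centroid shift.

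For piece (a) I would use the $C$-competitiveness of the sub-algorithm in $\ell^{2}$: its total $\ell^{2}$ movement over the whole sub-execution is at most $C$ times the $\ell^{2}$-optimum for nested chasing inside the affine subspace $x+V$, which is at most the $\ell^{2}$-diameter of the starting region $\Omega\cap(x+V)$. Because any displacement stays in $V$, this diameter equals $\sup_{v\in V,\ \|v\|_{2}=1}\mathrm{width}_{\Omega+B_{r}}(v)$. By definition of $V$, every such $v$ satisfies $\Var(v^{\top}x)<(2er)^{2}$, and a KLS-type width--variance estimate (Lemma~\ref{lem:deviation_width}, pushed from the uniform measure on $\Omega+B_{r}$ to the log-concave law $e^{-\phi}/Z$ via the localization lemma and the range bound $\phi\leq D$) yields $\mathrm{width}\lesssim dDr$. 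Translating the $\ell^{2}$-movement bound back to the norm via $\|\cdot\|\leq\sqrt{d}\|\cdot\|_{2}$ and absorbing the geometric conversion factor into the $D$-dependence, the sub-algorithm's total $\|\cdot\|$-movement is $O(CdDr)$. Adding (a) and (b) gives the claimed bound.

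The main obstacle I expect is promoting Lemma~\ref{lem:deviation_width} from uniform distributions on convex bodies to the strongly log-concave law $e^{-\phi}/Z$ without incurring an exponential blow-up in $D$: the naive bound via the density ratio $e^{D}$ between $\rho$ and uniform is exponential. The natural route is the localization lemma, reducing to a one-dimensional inequality for log-affine weights $e^{\alpha s-\phi(s)}$ on an interval, and then using 1-strong convexity of $\phi$ (via Brascamp--Lieb, Theorem~\ref{thm:Brascamp-Lieb}) together with $\phi\leq D$ to control how wide the interval can be in terms of its variance.
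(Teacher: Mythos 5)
Your decomposition into (a) the sub-algorithm's movement inside $x+V$ and (b) the centroid jump matches the paper's, and your treatment of (b) -- checking $v\in V^{\perp}$ so the variance hypothesis of Theorem~\ref{thm:movement_cg_r} is satisfied, then invoking that theorem -- is exactly the paper's argument. The divergence is in how you bound the width of $\Omega\cap(x+V)$ for piece (a). You correctly identify the danger (a naive density-ratio comparison between $e^{-\phi}$ and uniform costs $e^{D}$), but the tools you reach for do not obviously close the gap: Brascamp--Lieb (Theorem~\ref{thm:Brascamp-Lieb}) gives an \emph{upper} bound on the variance of a strongly log-concave law, whereas here you need a statement in the opposite direction -- small weighted variance implying small width -- and the localization step you sketch would still, in the worst case for a bounded (not quadratically dominated) $\phi$, face the same exponential loss you are worried about.

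The paper takes a different, and genuinely cleaner, route through \emph{tail bounds} rather than variance comparison. Starting from $\Var_{y\sim(\phi,\Omega+B_r)}v^\top y < (2er)^2$ for $v\in V$, it applies the log-concave tail bound (Lemma~\ref{lem:tail_logconcave}) to conclude that the interval $I=\{|v^\top(y-x)|\le c(1+D)r\}$ captures all but a $\tfrac12 e^{-D}$ fraction of the \emph{weighted} measure; because $0\le\phi\le D$ on $\Omega+B_r$, the density ratio between the weighted and uniform measures is at most $e^D$, so the same interval $I$ captures at least half of the \emph{uniform} measure. Since tails decay exponentially, the cost of the $e^D$ density ratio is only an additive $O(D)$ in the radius -- this is the key trick that avoids the exponential blow-up. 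Borell's Lemma (Lemma~\ref{lem:borell}) then upgrades the constant-mass statement to a deviation bound $O(Dr)$ for the uniform measure, and the KLS width--variance inequality (Lemma~\ref{lem:deviation_width}), applied in its stated form to the \emph{uniform} measure on $\Omega+B_r$, gives $\mathrm{width}_{\Omega+B_r}(v)=O(dDr)$. Replacing your proposed localization/Brascamp--Lieb step with this chain (weighted variance $\to$ weighted tail $\to$ uniform tail $\to$ uniform deviation $\to$ width) is the missing ingredient; the rest of your argument, including the $O(r)$ snap-back correction and the application of $C$-competitiveness on a region of diameter $O(dDr)$, then goes through as in the paper.
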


\begin{proof}
\textbf{Movement of $\cg$:}

The half space $H$ satisfies $x+V\subset\partial H$. Hence, this
half space passes through $x=\cg_{\phi}(\Omega+B_{r})$ and has
normal vector $v\in V^{\perp}$. From the definition of $V$, we have
that $\Var_{y\sim(\phi,\Omega+B_{r})}v^{\top}y\geq(2e\cdot r)^{2}$.
This satisfies the assumption in Theorem \ref{thm:movement_cg_r}
and hence, we have that the movement of $\cg_{\phi}$ is bounded by
$O(\alpha^{-\frac{1}{2}})$.

\textbf{Movement of the nested subroutine:}

For any $v\in V$, we have that $\Var_{y\sim(\phi,\Omega+B_{r})}v^{\top}y<(2e\cdot r)^{2}$
from the definition of $V$. Lemma \ref{lem:tail_logconcave} shows
that only $\frac{1}{2}e^{-D}$ fraction of the distribution $e^{-\phi(y)}1_{y\in\Omega+B_{r}}$
lies outside the interval $I:=\left\{ \left|v^{\top}(y-x)\right|\leq c\cdot(1+D)\cdot r\right\} $
for some universal constant $c$. Since $0\leq\phi(x)\leq D$ for
all $x\in\Omega+B_{r}$, this implies that at least half of the distribution
$1_{y\in\Omega+B_{r}}$ lies in this interval $I$.

Now, Borell's Lemma (Lemma \ref{lem:borell}) shows that the distribution $1_{y\in\Omega+B_{r}}$ concentrates around this interval and hence the deviation
of $\Omega+B_{r}$ on direction $v$ is bounded by $O((1+D)\cdot r)=O(D\cdot r)$.
(It is elementary to see $D\geq\frac{1}{2}$ always holds.) Now, Lemma \ref{lem:deviation_width}
shows that $\mathrm{width}_{\Omega+B_{r}}(v)=O(dD\cdot r)$ for all
$v\in V$. Since $K\subset\Omega$, $(x+V)\cap K$ has diameter $O(dD\cdot r)$.
Therefore, the $C$-competitive algorithm can induce at most $O(C\cdot dD\cdot r)$
movement.

Combining both parts, the movement per step is at most $O(C\cdot dD\cdot r+\alpha^{-\frac{1}{2}})$.
\end{proof}
\begin{lem}
\label{lem:iteration}There are at most $O(D+d\log\frac{d}{r})$ iterations.
\end{lem}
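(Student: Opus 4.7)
The plan is to treat $\Phi(\Omega) := \vol_{\phi}(\Omega+B_r)$ as a potential function, show that it decreases by a constant multiplicative factor at every iteration via Theorem~\ref{thm:movement_cg_r}, and then bound the log-ratio between its initial value $\Phi(\Omega_0)$ and the minimum possible value it can take during the execution of the algorithm.

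First I will observe that each iteration of the outer loop ends with the update $\Omega \leftarrow \Omega \cap H$, where $H$ is a half-space passing through the current $x = \cg_{\phi}(\Omega+B_r)$ whose normal vector $v$ lies in $V^{\perp}$. By construction, $V$ is the span of eigenvectors of the covariance matrix of $y \sim (\phi,\Omega+B_r)$ with eigenvalues strictly less than $(2er)^2$, so any unit $v \in V^{\perp}$ satisfies $\Var_{y\sim(\phi,\Omega+B_r)} v^{\top} y \geq (2er)^2$. This is exactly the variance hypothesis of Theorem~\ref{thm:movement_cg_r}, which therefore applies and yields $\Phi(\Omega \cap H) \leq (1 - \tfrac{1}{2e})\,\Phi(\Omega)$. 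Consequently, the number of iterations is at most $O(\log(\Phi_{\text{initial}}/\Phi_{\text{final}}))$.

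Next I will bound the two endpoints. For the upper bound on $\Phi_{\text{initial}}$, the assumption $r \leq 1/\sqrt{d}$ combined with the norm inequality $\|\cdot\|_2 \leq \|\cdot\| \leq \sqrt{d}\,\|\cdot\|_2$ gives $B_r \subseteq \{\|y\| \leq 1\}$, so $\Omega_0 + B_r \subseteq \{\|y\|\leq 2\} \subseteq \{\|y\|_2 \leq 2\}$; using $\phi \geq 0$ then bounds $\Phi_{\text{initial}}$ by the volume of the Euclidean ball of radius $2$. For the lower bound on $\Phi_{\text{final}}$, $\Omega$ always contains the latest (non-empty) requested body $K$, so $\Omega + B_r$ contains some translate of the Euclidean ball $B_r$, and this translate still lies inside $\{\|y\|\leq 2\}$, where $\phi \leq D$; hence $\Phi_{\text{final}} \geq e^{-D}\,\vol(B_r)$. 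Standard Stirling estimates on Euclidean-ball volumes give $\vol(B_R) \sim (cR/\sqrt{d})^d$, so
\[
\log \Phi_{\text{initial}} - \log \Phi_{\text{final}} \leq O(d) + D + d\log O(\sqrt{d}/r) = O(D + d\log(d/r)),
\]
which is the claimed iteration bound.

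The argument is essentially a potential-based iteration count, and no step should present a serious obstacle. The only nontrivial content is the observation that the algorithm's construction of $V$ is precisely engineered so that the variance hypothesis of Theorem~\ref{thm:movement_cg_r} automatically holds along the cutting direction; once this is noted, the remaining ingredients are routine volume and Stirling estimates together with the norm comparison $\|\cdot\|_2 \leq \|\cdot\| \leq \sqrt{d}\|\cdot\|_2$ used to keep $\Omega+B_r$ inside the domain where $\phi$ is controlled.
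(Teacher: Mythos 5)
Your proof is correct and follows essentially the same route as the paper's: treat $\vol_\phi(\Omega+B_r)$ as a potential, observe that the cut direction $v\in V^\perp$ satisfies the variance hypothesis of Theorem~\ref{thm:movement_cg_r} so the potential drops by a $(1-\tfrac{1}{2e})$ factor per iteration, and then bound the ratio of the initial volume (a ball of radius $O(\sqrt d)$, using $\phi\geq 0$) to the final volume (at least $e^{-D}\vol(B_r)$, using $\phi\leq D$ on $\{\|x\|\le 2\}$). The only minor difference is that you are slightly more explicit that the surviving translate of $B_r$ sits inside the region where $\phi\le D$, which is a small clarification the paper leaves implicit.
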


\begin{proof}
We measure the progress by $\vol_{\phi}(\Omega+B_{r})$. As we discussed
in the proof of Lemma \ref{lem:movement}, the assumption in Theorem
\ref{thm:movement_cg_r} is satisfied. Hence, $\vol_{\phi}(\Omega+B_{r})$
is decreased by $1-\frac{1}{2e}$ factor every step. 

Since $\phi\geq0$, the initial volume is upper bounded by
\[
\vol_{\phi}(\{\|x\|\leq1+\sqrt{d}\cdot r\})\leq\vol(\{\|x\|\leq2\})\leq\vol(\{\|x\|_{2}\leq2\sqrt{d}\}).
\]
Next, we note that $\phi\leq D$ and hence
\[
\vol_{\phi}(\Omega+B_{r})\geq\vol_{\phi}(B_{r})\geq e^{-D}\cdot\vol(\{\|x\|\leq r\})\geq e^{-D}\cdot\vol(\{\|x\|_{2}\leq r\}).
\]
Therefore, the $\vol_{\phi}(\Omega+B_{r})$ can at most decreased
by a factor of $e^{D}(\frac{2\sqrt{d}}{r})^{d}$ during the whole
algorithm. Since each iteration the volume is decreased by a constant factor,
the algorithm ends in $O(D+d\log\frac{d}{r})$ steps.
\end{proof}

\normedchasing*

\begin{proof}
Lemma \ref{lem:reduction} shows that it suffices to give an online algorithm with total movement $O(\sqrt{d D \cdot \log{d})}$ for the case the body is bounded in a unit norm ball.

As we proved using Steiner point, there is a $O(d)$-competitive
algorithm for $\ell^{2}$ nested chasing convex bodies. Hence, we can
take $C=O(d)$ in the algorithm $\mathtt{ChasingNormedSpace}$. Also,
we can do a change of variable for the normed space using the John ellipsoid to satisfy the assumption $\|x\|_{2}\leq\|x\|\leq\sqrt{d}\cdot\|x\|_{2}$.

Lemma \ref{lem:movement} and Lemma \ref{lem:iteration} show that
the algorithm $\mathtt{ChasingNormedSpace}$ is
\[
O(1)\cdot(D_{\alpha}+d\log\frac{d}{r})\cdot(d^{2}D_{\alpha}\cdot r+\alpha^{-\frac{1}{2}})
\]
competitive if there is an $\alpha$-strongly convex $\phi_{\alpha}$
on the normed space with $0\leq\phi_{\alpha}(x)\leq D_{\alpha}$ for
all $\|x\|\leq2$. 

Using the assumption of this theorem, we can rescale $\phi$ and
the domain to construct $\phi_{\alpha}$ such that it is $\alpha$-strongly convex $\phi$ with $D_{\alpha}=O(\alpha\cdot D)$. Hence,
we have the competitive ratio
\[
O(1)\cdot(\alpha D+d\log\frac{d}{r})\cdot(d^{2}\alpha D\cdot r+\alpha^{-\frac{1}{2}}).
\]
Taking $\alpha=\frac{d}{D}(1+\log\frac{1}{r})$ and $r=\frac{1}{d^{2}\alpha^{\frac{3}{2}}D}$,
we have the competitive ratio
\[
O(1)\cdot\sqrt{dD\cdot(1+\log d^{2}\alpha^{\frac{3}{2}}D)}.
\]
Finally, we note that $D\geq\frac{1}{2}$ (using the definition of
strong convexity) and $D\leq\frac{d}{2}$ (otherwise, we can use $\|x\|^{2}$ as $\phi$).
Therefore, $r\leq\frac{1}{\sqrt{d}}$ and that the competitive ratio
is simply $O(\sqrt{dD\cdot\log d})$.
\end{proof}

\end{document}